\newcommand{\opt}{\mbox{\sc Opt}}
\newcommand{\np}{\mbox{\sf NP}}
\newcommand{\hide}[1]{}
\newcommand{\displayqed}{\eqno{\hbox{\hfill\qed}}}
\title{The Max-Distance Network Creation Game \\on General Host Graphs\thanks{A preliminary version of this paper was presented at the
\emph{8th International Workshop on Internet \& Network Economics (WINE'12)}, Liverpool, UK, December 9-12, 2012, and appeared in
Vol. 7695 of Lecture Notes in Computer Science, Springer, 393--406. DOI: \mbox{http://dx.doi.org/10.1007/978-3-642-35311-6\_29} -- This work was
partially supported by the Research Grant PRIN 2010 ``ARS
TechnoMedia'', funded by the Italian Ministry of Education, University, and Research.
}}
\author{Davide Bil\`o\inst{1} \and
Luciano Gual\`a\inst{2} \and Stefano Leucci\inst{2} \and Guido Proietti\inst{3,4}
\institute{Dipartimento di Scienze Umanistiche e Sociali,
Università di Sassari, Italy \and Dipartimento di Ingegneria dell'Impresa,
Università di Roma ``Tor Vergata'', Italy \and Dipartimento di
Ingegneria e Scienze dell'Informazione e Matematica, \\Università degli Studi dell'Aquila, Italy  \and Istituto di Analisi dei Sistemi
  ed Informatica,
  CNR, Rome, Italy \\
E-mail: {\tt davide.bilo@uniss.it; guala@mat.uniroma2.it; stefano.leucci@graduate.univaq.it;
guido.proietti@univaq.it} }}
\begin{document}

\pagestyle{plain}

\maketitle

\begin{abstract}
{\fontsize{10}{12} \selectfont
In this paper we study a generalization of the classic \emph{network creation game} in the scenario in which the $n$ players sit on a given arbitrary \emph{host graph}, which constrains the set of edges a player can activate at a cost of $\alpha \geq 0$ each. This finds its motivations in the physical limitations one can have in constructing links in practice, and it has been studied in the past only when the routing cost component of a player is given by the sum of distances to all the other nodes. Here, we focus on another popular routing cost, namely that which takes into account for each player its \emph{maximum} distance to any other player. For this version of the game, we first analyze some of its computational and dynamic aspects, and then we address the problem of understanding the structure of associated pure Nash equilibria. In this respect, we show that the corresponding price of anarchy (PoA) is fairly bad, even for several basic classes of host graphs. More precisely, we first exhibit a lower bound of $\Omega (\sqrt{ n / (1+\alpha)})$ for any $\alpha = o(n)$. Notice that this implies a counter-intuitive lower bound of $\Omega(\sqrt{n})$ for very small values of $\alpha$ (i.e., edges can be activated almost for free). Then, we show that when the host graph is restricted to be either $k$-regular (for any constant $k \geq 3$), or a 2-dimensional grid, the PoA is still $\Omega(1+\min\{\alpha, \frac{n}{\alpha}\})$, which is proven to be tight for $\alpha=\Omega(\sqrt{n})$. On the positive side, if $\alpha \geq n$, we show the PoA is $O(1)$. Finally, in the case in which the host graph is very sparse (i.e., $|E(H)|=n-1+k$, with $k=O(1)$), we prove that the PoA is $O(1)$, for any $\alpha$.

\bigskip

{\bf Keywords:} Network creation games, Pure Nash equilibrium, Price of Anarchy, Host graph.

}
\end{abstract}

\setcounter{footnote}{0}
\section{Introduction}
In a \emph{network creation game} (NCG), we are given $n$ players identified as the nodes of a graph, and each player attempts to connect itself to all the other players. In such a decentralized process, each player aims to selfishly optimize a certain \emph{routing} cost towards the other players. Thus, its action consists of choosing a suitable subset of players, which are then made adjacent through the activation of the corresponding set of incident edges. Unavoidably, activating a link incurs a cost to the player, and so the overall \emph{building} cost should be strategically balanced with the aforementioned routing cost.

Due to their generality, it is in clear evidence that NCGs can model very different practical situations, depending on how all the build-up ingredients are mixed.
In the very classic formulation of the game \cite{JW96}, each player has no limitations in choosing a subset of adjacent players, its routing cost is a function of the sum of distances to all the other players (i.e., the so called \emph{sum cost}), and activating a link has a fixed cost $\alpha \ge 0$. Not surprisingly, this model was devised by the economists, which were mainly interested in understanding whether the attainment of an equilibrium status (i.e., a status in which
players are not willing to move from) for a mutual-relationships social system is compatible with the behavior of the players, which tend to establish selfishly their personal contacts.

With the recent advent of the algorithmic game theory, the interest on NCGs has been reawakened. This is especially due to the fact that NCGs are fit to model the decentralized construction of \emph{communication} networks, in which the constituting components (e.g., routers and links) are activated and maintained by different owners, as in the Internet. According to its performance measurement philosophy, computer scientists put a new special emphasis on the challenge of understanding how the social utility for the (very large) system as a whole is affected by the selfish behavior of the players. This trend originated from the paper of Fabrikant \emph{et al.} \cite{FLM03}, and was then followed by a sequel of papers, as detailed in the following.

\paragraph{Previous work.}
As said before, the canonical form of a NCG, also known as \textsc{Sum}NCG, is as follows: We are
given a set of $n$ players, say $V$, where the
strategy space of player $v \in V$ is the power set $2^{V \setminus\{v\}}$. Given
a combination of strategies $\sigma=(\sigma_v)_{v \in V}$, let $G(\sigma)$ denote the underlying undirected
graph whose node set is $V$, and whose edge set is
$E(\sigma)=\{(v,u): v \in V \wedge u \in \sigma_v\}$. Then, the \emph{cost} incurred by player
$v$ under $\sigma$ is

\begin{equation}
\label{cost} C_v(\sigma) = \alpha \cdot |\sigma_v| + \sum_{u \in V}
d_{G(\sigma)}(u,v)
\end{equation}

\noindent where $d_{G(\sigma)}(u,v)$ is the distance between nodes $u$
and $v$ in $G(\sigma)$. Thus, the cost function implements the inherently antagonistic
goals of a player, which on one hand
attempts to buy as little edges as possible, and on the other hand
aims to be as close as possible to the other nodes in the
resulting network. These two criteria are suitably balanced in
(\ref{cost}) by making use of the parameter $\alpha \geq 0$.
Consequently, the \emph{Nash Equilibria}\footnote{In this paper, we only
focus on \emph{pure} strategies Nash equilibria.} (NE) space of the game is
a function of it. Actually, if we characterize such a space in terms of the \emph{Price of Anarchy}
(PoA), then this has been shown to be constant for all values of $\alpha$ except for $n^{1-\varepsilon} \leq \alpha \leq 65 \, n$, for any $\varepsilon \geq 1/\log n$ (see \cite{MMM13,MS10}).

A first natural variant of \textsc{Sum}NCG was introduced in
\cite{DHM07}, where the authors redefined the player cost function
as follows

\begin{equation}
\label{max} C_v(\sigma) = \alpha \cdot |\sigma_v| + \max_{u \in V} d_{G(\sigma)}(u,v).
\end{equation}

\noindent This variant, named \textsc{Max}NCG, received further
attention in \cite{MS10}, where the authors improved the PoA of
the game on the whole range of values of $\alpha$, obtaining in this case that the PoA is constant for all values of $\alpha$ except for $129 > \alpha = \omega(1/\sqrt{n})$.

Besides these two basic models, many variations on the theme have been defined.
In an effort of defining $\alpha$-free models, in \cite{LPR08} the authors proposed a
variant in which a player, when forming the network, has a limited budget to establish links to other
players. This way, the player cost function will only describe the
goal of the player, namely either the maximum distance or the
total distance to other nodes. Since in \cite{LPR08} links and hence the resulting graph are seen as directed, a
natural variant of the model was given in \cite{EFM11}, where the undirected case was considered.
Afterwards, in \cite{BGP11} the authors proposed a model complementing the one given in \cite{EFM11}. More precisely, they assumed that the
cost function of each player now only consists of the number of bought edges (without any budget on them), and a player needs to connect to the network by satisfying the additional constraint of staying within a given either maximum or total distance to the rest of players.
Then, in \cite{ADH10} the authors proposed a further variant, called \textsc{Basic}NCG, in which given some
existing network, the only improving transformations allowed are
\emph{edge swap}, i.e., a player can only modify a \emph{single}
incident edge, by either replacing it with a new incident edge, or
by removing it. Recently, this model has been extended to the case in which edges are oriented and players can swap only outleading edges \cite{MS12}. Notice that, differently from the previous models, the problem of computing a best-response strategy of a player in \textsc{Basic}NCG is \emph{not} \np-hard. This inspired a subsequent model \cite{L12} in which a player can swap, add, or delete a single edge. Here the best response strategy is still computable in polynomial time while the player has more freedom to act.

Generally speaking, in all the above models the obtained results on the PoA are asymptotically worse than those we get in the two basic models, and we refer the reader to the cited papers for the actual bounds.

\paragraph{Our results.}

In this paper we concentrate on a seemingly underplayed generalization of NCGs, namely that in which for each player the set of possible adjacent nodes is constrained by a given connected, undirected graph $H$, which in the end will host the created network. This finds its practical motivations in the physical limitations of constructing links, and was originally studied in \cite{DHM09} for \textsc{Sum}NCG, where it is shown that the PoA is upper bounded by $O(1 + \sqrt{\alpha})$ and $O(1+\min\{\sqrt{n},n^2/\alpha\})$ for $\alpha < n$ and $\alpha \geq n$, respectively, and lower bounded by $\Omega(1 + \min\{\alpha/n,n^2/\alpha\})$. Here, we focus on the max-distance version, that we call \textsc{Max}NCG$(H)$, and we show that also in this case the PoA is fairly bad,\footnote{According to the spirit of the game, we concentrate on connected equilibria only. In fact, to avoid pathological disconnected equilibria, we can slightly modify the player's cost function (\ref{max}) as it was done in \cite{EFM11}, in order to incentivize the players to converge to connected equilibria. Alternatively, this can be obtained by assuming that initially the players sit on a connected network (embedded in the host graph), and they move (non-simultaneously) with a myopic best/improving response dynamics.} even when the host graph is restricted to some basic standard layout patterns. More precisely, we show that the PoA is $\Omega(1+\min\{\alpha, n/\alpha\})$ for the classes of $k$-regular (with any constant $k \geq 3$) and 2-dimensional grid host graphs. This lower bound is asymptotically tight for $\alpha=\Omega(\sqrt{n})$, since we can prove a general upper bound of $O\Big(1 + \frac{n}{\alpha+\rho_H}\Big)$, where $\rho_H$ is the radius of $H$. Moreover, on general host graphs, we exhibit a lower bound of $\Omega \Big(\sqrt{\frac{n}{1+\alpha}}\Big)$ for $0 \leq \alpha = o (n)$. Quite surprisingly, this implies that the PoA is $\Omega(\sqrt{n})$ even when the players can build edges for free. On the positive side, if $\alpha \geq n$, we show the PoA is at most 2 (this is a direct consequence of the fact that in this case any equilibrium is a tree). Finally, in the meaningful practical case in which the host graph is sparse (i.e., $|E(H)|=n-1+h$, with $h=O(n)$), we prove that the PoA is $O(1+h)$, and so for very sparse graphs, i.e. $h=O(1)$, we obtain that the PoA is constant.

Preliminarily to the above study, we also provide some results concerning the computational and dynamic aspects of the game.  First, we prove that computing a best response for a player is \np-hard for any $0<\alpha=o(n)$, thus extending a similar result given in \cite{MS10} for \textsc{Max}NCG when $\alpha=2/n$. Then, we prove that \textsc{Max}NCG$(H)$ is not a potential game, by showing that an improving response dynamic does not guarantee to converge to an equilibrium, even if we assume a minimal \emph{liveness} property that no player is prevented from moving for arbitrarily many steps. This implies that an improving response dynamic may not converge for the \textsc{Max}NCG game as well (after relaxing such a liveness property).
A deeper discussion on dynamics in NCGs can be found in \cite{KL13,L11}.

The paper is organized as follows: in Section 2 we analyze the computational/convergence aspects of the game, while Sections 3 and 4 discuss the upper and lower bounds to the PoA, respectively.

\section{Preliminary results}
First of all, we observe that, as for the sum-distance version of the problem studied in \cite{DHM09}, it is open to establish whether \textsc{Max}NCG$(H)$ always admits an equilibrium. This problem is particularly intriguing, since the topology of $H$ could play a discriminating role on that. We conjecture an  affirmative answer to this question for any $\alpha > 0$ (for $\alpha=0$ it is trivially true as any strategy profile $\sigma$ such that $G(\sigma)=H$ is a NE). As a first step towards this direction, observe that given any $H$, a breadth-first search tree rooted at a center of $H$, and in which each node owns the edges towards its children, is an equilibrium whenever $\alpha \geq \rho_H$, where $\rho_H$ is the radius of $H$. Indeed, notice that each vertex $u$ has no unactivated edges towards the vertices of its subtrees. This immediately implies that $u$ cannot improve its cost by changing to a strategy having at most the same number of bought edges. Moreover, in any other strategy, $u$ must incur a cost of at least $\alpha\geq \rho_H$ to buy some additional edges while its routing cost, which is actually less than or equal to $2\rho_H$ cannot decrease by more than $\rho_H$ since it can never be less than $\rho_H$.

Besides that, and similarly to other NCGs, we also have the bad news that the problem of computing the best response of player is \np-hard, as stated in the following theorem.

\begin{theorem}
\label{th:best response}
For every constant $\epsilon>0$ and for every $0<\alpha=o(n)$, the problem of computing a best response strategy of a player is \np-hard.
\end{theorem}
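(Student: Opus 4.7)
The plan is to give a polynomial-time reduction from the Dominating Set problem. Given an instance $(G',k)$ with $G'=(V',E')$ and $|V'|=n'$, I would build in polynomial time a host graph $H$ on $n=\mathrm{poly}(n')$ players, a fixed strategy profile for every player except a distinguished one $v$, and a threshold $T$, in such a way that $v$ has a response of cost at most $T$ if and only if $G'$ admits a dominating set of size at most $k$. Polynomial padding of $n$ (by dummy players attached to $v$ via short paths that do not alter distances from $v$) serves both to place the value $\alpha(n)$ into the regime needed below and to leave enough room inside the host graph for the distance gadgets described next.

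The core gadget is as follows. Besides $v$, the host graph $H$ contains $n'$ \emph{selectors} $u_1,\ldots,u_{n'}$ (to which $v$ is the unique $H$-neighbor) and $n'$ \emph{witnesses} $w_1,\ldots,w_{n'}$, plus auxiliary vertices. For every pair $(i,j)$ with $i\in N_{G'}[j]$, a path of length $L$ is added between $u_i$ and $w_j$; a \emph{backbone} gadget is placed in $H-v$ so that this subgraph is connected and any two selectors are at distance exactly $M$ from each other inside it. All edges of $H$ not incident to $v$ are bought, once and for all, by the other players, so the strategy space of $v$ is simply $2^U$ with $U=\{u_1,\ldots,u_{n'}\}$.

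An elementary distance computation then shows that, if $v$'s strategy $S\subseteq U$ encodes a dominating set of $G'$, every witness sits at distance exactly $L+1$ from $v$ and the cost of $v$ is $\alpha|S|+L+1$; whereas if $S$ misses some $j^{\ast}\in V'$, every $v$-to-$w_{j^{\ast}}$ path is forced to take a long detour---either crossing the backbone (an extra $+M$) or chaining through at least two intermediate witnesses (an extra $+2L$)---so the cost of $v$ is at least $\alpha+L+1+\min(M,2L)$. Choosing $L,M=\Theta(n)$---which is possible because $\alpha=o(n)$ gives $\alpha(n)\cdot n'<n/2$ for every sufficiently large polynomial-sized $n$---makes both gaps exceed $\alpha\,n'$; this rules out every non-dominating strategy and forces $v$'s optimum to buy a minimum dominating set of $G'$ at cost $\alpha\,\gamma(G')+L+1$, where $\gamma(G')$ is the domination number. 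Setting $T=\alpha k+L+1$ completes the equivalence.

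The main obstacle, which is exactly what the $\alpha=o(n)$ hypothesis controls, is the uniformity of the argument across the whole stated range: $M$ and $L$ must be chosen large enough to outweigh the $\alpha(n'-1)$-saving of a cheap one-selector non-dominating strategy, while keeping $H$ within only $n$ players. The $o(n)$ assumption is used precisely here, since it guarantees a polynomial padding in which $\alpha\,n'=o(n)$, so that $M,L=\Theta(n)$ fit inside $H$. Once this calibration is fixed, the only remaining work is a routine case-analysis comparing the cost of the dominating-set strategy against every alternative deviation of $v$ (strict supersets of a dominating set, various multi-witness reroutes, mixed strategies buying only part of a dominating set, etc.), each of which is handled by the explicit distance formulas built into the gadget.
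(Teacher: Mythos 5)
Your overall strategy is the same as the paper's: a covering-type reduction (the paper uses 3-Exact-3-Cover, you use Dominating Set, which is the same idea) built around a distinguished player whose only admissible moves are edges to ``selector'' vertices, with long paths of length calibrated against $\alpha$ so that the $\alpha$-savings of buying fewer selectors can never compensate the distance penalty of an uncovered element. The calibration $L=\Theta(n)$ versus $\alpha\, n'=o(n)$ matches the paper's choice $\eta>(k+1)\alpha$.

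However, there is a genuine gap in your cost accounting: in \textsc{Max}NCG the routing cost of $v$ is its \emph{eccentricity}, i.e.\ the maximum distance to \emph{every} vertex of the created graph, not just to the witnesses. In your yes-case you claim the cost is $\alpha|S|+L+1$ because every witness is at distance $L+1$, but an unselected selector $u_{i'}$ is at distance roughly $2L+1$ (reachable only through a witness) or $1+M$ (through the backbone), interior vertices of the $u_i$--$w_j$ paths sit at distance up to about $\tfrac{3}{2}L$, and the unspecified backbone and padding vertices contribute further. Hence the strategy encoding a dominating set does \emph{not} have cost $\alpha|S|+L+1$, the threshold $T=\alpha k+L+1$ is wrong, and the claimed yes/no gap of $\min(M,2L)$ must be recomputed against the true yes-case eccentricity. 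This is exactly the difficulty the paper resolves by appending a dedicated path $P$ of length $2\eta$ to $x$: it pins the eccentricity of $x$ in the yes-case at exactly $2\eta+1$ and the proof explicitly verifies that every other vertex is within that distance. Your reduction is very likely repairable along the same lines (the gap between the two cases is still $\Theta(L)$ once the true eccentricities are computed, and you would need to specify the backbone and argue that none of its vertices becomes the farthest one), but as written the central equivalence ``cost at most $T$ iff a dominating set of size $k$ exists'' does not hold with the stated $T$.
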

\begin{proof}
The reduction is from the \np-complete {\em 3-Exact 3-Cover} problem (3X3C for short) which, given (i) a set $O$ of $3k$ objects, with $k \in \mathbb{N}$, and (ii) a set ${\cal S}$ of subsets of $O$ each having cardinality equal to 3 and such that every $o\in O$ is a member of at most three sets in ${\cal S}$, asks for determining whether there exists a subset ${\cal S}'\subseteq {\cal S}$ of cardinality equal to $k$ that covers $O$, i.e., $\bigcup_{S \in {\cal S}'}S=O$. The reduction is the following. Let $\eta > (k+1)\alpha$. For a given instance of the 3X3C problem with $3k$ objects we build a host graph $H$ having $n=\Theta(\eta k)$ vertices. More precisely, $H$ contains an object vertex $v_o$ for every object $o \in O$, and a set vertex $u_S$ for every $S \in {\cal S}$. For every set $S \in {\cal S}$ and for every $o \in O$, $H$ contains a path of length $\eta$ having $v_o$ and $u_S$ as endpoints iff $o \in S$. All paths are vertex-disjoint except for at most their endpoints. Finally, $H$ contains an additional path $P$ of length $2\eta$ which is vertex-disjoint w.r.t. all the other paths, and an additional vertex $x$ linked to all the set vertices and to one endpoint of $P$, that we denote by $z$.

Let $\sigma$ be any strategy profile such that $G(\sigma)$ contains all edges of $H$ except those incident to $x$. We claim that any best response strategy of player $x$ has cost equal to $\alpha (k+1) +2\eta+1$ iff ${\cal S}$ contains a subset of cardinality $k$ that covers $O$.

To prove one direction, observe that if ${\cal S}'\subseteq {\cal S}$ has size $k$ and covers $O$, then, by buying the edge towards $z$ and all the $k$ edges $(x,u_S)$ for every $S \in {\cal S}'$, $x$ would incur a cost equal to $\alpha (k+1)+2\eta+1$. In fact, it is easy to see that each vertex of $G$ which is neither $x$ nor a vertex in $P$ is at distance of at most $\eta$ from some object vertex. Furthermore, each object vertex is at distance of at most $\eta$ from some set vertex $u_S$ such that $S \in {\cal S}'$.

To prove the other direction, let $\sigma'$ be the strategy profile obtained from $\sigma$ by changing $x$'s strategy with one of its best response strategies and assume that $C_{x}(\sigma')= \alpha (k+1)+2\eta+1$. Let ${\cal S}'$ be the subset of ${\cal S}$ containing set $S$ iff $x$ is buying the edge towards $u_S$ in $\sigma'$. First of all, observe that if ${\cal S}'$ does not cover $O$, or $x$ has not bought the edge towards $z$, then the routing cost of $x$ is greater than or equal to $3\eta+1>\alpha(k+1)+2\eta+1$. Consequently, $x$ has bought the edge towards $z$ and ${\cal S}'$ covers $O$. Therefore $|{\cal S}'|\geq k$. As the distance from $x$ to the endpoint of $P$ different from $z$ is $2\eta+1$, and since $C_x(\sigma')= \alpha(k+1)+2\eta+1$, it follows that $|{\cal S}'|=k$. \qed
\end{proof}

Notice that the result stated in Theorem \ref{th:best response} holds for any $0<\alpha=o(n)$, and so this extends the \np-hardness proof given in \cite{MS10} which holds for complete host graphs and $\alpha=2/n$.

We now discuss a negative result about the convergence of the improving response dynamics. To the best of our knowledge, this is the first result showing that an improving response dynamics on \textsc{MaxNCG} might not converge to an equilibrium. A deeper discussion about dynamics in NCGs can be found in \cite{KL13,L11}.

\begin{theorem}
	\label{thm:LB_PoA}
	For every value of $\alpha < \frac{n}{2} - 6$, \textsc{MaxNCG}$(H)$ is not a potential game. Moreover, if $\alpha>0$, an improving response dynamics may not converge to an equilibrium.
\end{theorem}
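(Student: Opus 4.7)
The plan is to exhibit an \emph{improving response cycle}: a finite sequence of strategy profiles $\sigma_0 \to \sigma_1 \to \cdots \to \sigma_k = \sigma_0$ in which each consecutive pair differs by a unilateral deviation of a single player, and each such deviation strictly decreases the cost of the moving player. The existence of such a cycle rules out a potential function, because any such function would have to strictly decrease along every improving move and therefore could not return to its starting value.

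First, I would design a specific host graph $H$ on $n$ vertices made of two parts: a small constant-size \emph{gadget} of active vertices on which the cycle takes place, and a \emph{padding} component (for instance a long path of pendant vertices, or a large complete subgraph glued to one gadget vertex) used to inflate $n$ while keeping all its internal edges and distances invariant along the cycle. The padding's role is twofold: it tunes the ratio $\alpha/n$ (so that the construction works up to $\alpha < n/2 - 6$, where the constant $6$ absorbs the size of the gadget), and it guarantees a fixed baseline eccentricity for the gadget vertices, so that a deviation inside the gadget affects only gadget distances.

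Next, I would construct the cycle explicitly. A natural pattern uses three active players $u,v,w$ in the gadget who take turns swapping the endpoint of a single bought edge: in turn one, $u$ redirects its edge from $v$ to $w$, strictly shortening its eccentricity; in turn two, $v$ compensates by redirecting its edge so that its own eccentricity improves thanks to $u$'s new edge; in turn three, $w$ closes the cycle by moving back toward the original configuration, again with a strict improvement in its eccentricity (here the $\max$ cost is essential, since only the farthest vertex matters, so each move trades one bottleneck for another rather than a global sum). For each of the $k$ steps one must verify that (i) the bought edge is available in $H$, (ii) the eccentricity of the moving player strictly decreases, or it stays the same while the number of bought edges drops by one, and (iii) no other vertex's eccentricity enters the moving player's cost because the padding keeps its distance to the gadget fixed. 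The bound $\alpha < n/2 - 6$ will appear naturally in (ii), where the gain in eccentricity (a small constant) has to dominate the possibly paid $\alpha$, and this is guaranteed as long as the padding makes the baseline eccentricity at least $\alpha + \text{const}$.

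The main obstacle is exactly this verification: because the cost function is $\max$ rather than $\sum$, the eccentricity of a moving player depends on the \emph{global} structure of $G(\sigma)$, so after each deviation one has to re-identify the farthest vertex and check that it lies within the gadget (controlled by the deviation) rather than in the padding (controlled by $H$'s layout). Once the cycle is in place, both parts of the statement follow: the cycle itself witnesses that \textsc{Max}NCG$(H)$ is not a potential game, and the adversarial schedule that repeatedly lets the next player along the cycle perform its improving move is a valid improving response dynamics that never terminates. Finally, the liveness property is respected: the cycle involves only finitely many players, each moving infinitely often, and any padding player is never \emph{prevented} from moving — it simply has no improving move available and therefore stays idle, which is consistent with the liveness requirement.
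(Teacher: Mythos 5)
Your overall strategy---exhibit a closed cycle of unilateral deviations on a host graph built from a small active gadget plus a long path that fixes the baseline eccentricity, with the path length calibrated so that $\alpha < n/2-6$ suffices---is exactly the paper's approach. However, the proposal stops short of an actual proof at precisely the point you yourself flag as ``the main obstacle'': no concrete cycle is exhibited and no move is verified to be improving. The specific pattern you sketch (three players $u,v,w$ each swapping one edge and returning to the start after three moves) is not what the paper does, and it is far from clear it can be realized: the paper's gadget is a cycle of $l>\alpha+6$ nodes plus a hub $x$ adjacent to all of them, and one round of moves consists of a player \emph{deleting} an edge (saving $\alpha$), the hub swapping its attachment point (saving $1$), and the first player \emph{buying the edge back} (saving $l-(\alpha+6)>0$). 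Crucially, one round does \emph{not} return to the initial profile---it rotates the hub's attachment by three positions---and the cycle closes only after $l$ rounds, using $l\equiv 1 \pmod 3$ so that the rotation eventually returns to the start. Without some such closing mechanism, ``each move improves and we come back to where we started'' is an assertion, not a construction.

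There is also a gap in how you derive the first claim. You rule out a potential by arguing that it ``would have to strictly decrease along every improving move,'' i.e., the ordinal-potential argument, which requires every move in the cycle to be \emph{strictly} improving. But the theorem asserts non-existence of a potential for every $\alpha < n/2-6$, including $\alpha=0$, where the edge-deletion move saves $\alpha=0$ and is therefore not strictly improving. The paper instead rules out an (exact) potential by computing that the \emph{total} cost change of the movers over the closed cycle is a strictly negative constant (which must be zero for an exact potential), and reserves the strictly-improving-cycle argument for the second claim, which is exactly why that claim carries the extra hypothesis $\alpha>0$. Your proposal does not distinguish the two claims and so does not explain why they have different hypotheses on $\alpha$. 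Finally, your remark that padding players ``simply have no improving move'' is slightly off: in the paper they own path edges and literally \emph{cannot} deviate without disconnecting the graph, which is what makes the liveness discussion (and the extension to complete host graphs) go through.
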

\begin{proof}
		We prove the non existence of a potential function by showing a cyclic sequence of strategy profiles where, at the end of each cycle, the total cost of the moving players has decreased by a positive constant.
			
	Let $l > \alpha+6$ be an integer satisfying $l \equiv 1 \pmod{3}$ and consider a host graph $H$ similar to the one shown in Figure \ref{fig:not_potential}. $H$ is composed by a cycle of $l$ nodes labelled from $1$  to $l$, by a path of $l-2$ edges having $x$ and $z$ as endpoints, and by all the edges between $x$ and the nodes of the cycle.
	The strategy profile $\sigma_a$ being played is shown by using a graphical notation explained in the caption.

\begin{figure}
	\center
	\includegraphics[scale=0.55]{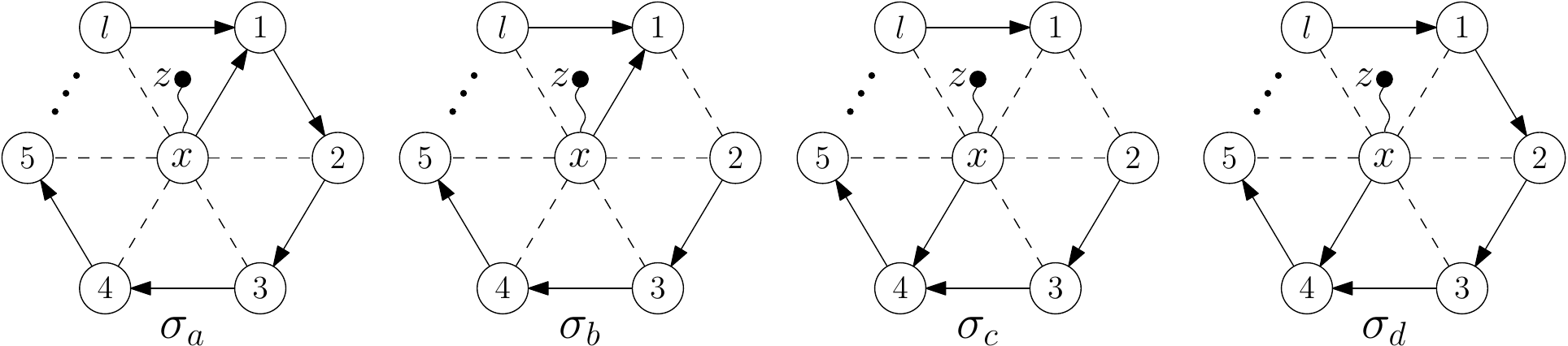}
	\caption{Representation of the strategy changes used in the proof of Theorem \ref{thm:LB_PoA}. On the left side, the initial configuration, where directed edges exit from their respective owner node, dashed edges are non-bought edges of $H$, and the spline denotes a path between $z$ and $x$, whose edges are arbitrarily owned.}
	\label{fig:not_potential}
\end{figure}

\noindent
In such a status, player $1$ is paying $\alpha+l-1$, whilst changing the strategy to $\sigma_b$ by removing the edge $(1,2)$ yields a cost of $l-1$, thus saving $\alpha$. Observe that now $C_x(\sigma_b)$ is $\alpha+l$, and so $x$ has interest in swapping the edge $(x,1)$ with the edge $(x,4)$, thus obtaining the strategy $\sigma_c$ and saving $1$.
	In such configuration $C_1(\sigma_c)$ has increased to $2l-4$, therefore player $1$ can buy back the edge $(1,2)$, as shown in strategy $\sigma_d$, thus reducing its cost to $\alpha+l+2$, i.e., saving $l-(\alpha+6)>0$.
	
	Notice how configuration $\sigma_d$ is similar to $\sigma_a$, with the only difference being the edge bought by $x$. Since $l \equiv 1 \pmod{3}$, by repeating $l$ times these strategy changes, every node in the cycle $\langle1, \dots, l\rangle$ will play a move at least once, and the resulting configuration is identical to $\sigma_a$, hence the players will cycle.
	
	To prove the latter part of the claim it suffices to note that after each cycle: (i) for $\alpha>0$ each strategy change is an improving response, and (ii) that the nodes in the path from $x$ to $z$ other than $x$, can never change their strategy.
	\qed
\end{proof}

Actually, the above proof shows that the improving response dynamics may not converge even if the minimal liveness property that each player takes a chance to make an improving move every fixed number of steps is guaranteed. Indeed, as observed in the proof, the players sitting on the path appended to $x$ do not move just because they cannot. Therefore, we can extend the above proof to the case in which $H$ is a complete graph if the players sitting on the path appended to $x$ can never move. This shows that the improving response dynamics may not converge on complete host graphs as well, i.e., for the classic NCG.

\section{Upper bounds}

In this section we prove some upper bounds to the PoA for the game. In what follows, for a generic graph $G$, we denote by $\rho_G$ and $\delta_G$ its radius and its diameter, respectively, and by $\varepsilon_G(v)$ the eccentricity of node $v$ in $G$. Moreover, we denote by $SC(\sigma)$ the \emph{social cost} of a generic strategy profile $\sigma$ (i.e., the sum of players' individual costs), and by $\mbox{\sc{Opt}}$ a strategy profile minimizing the social cost. Then

\begin{lemma}\label{lm:UB_aux}
Let $G=G(\sigma)$ be a NE, and let $\alpha=O(n)$. Then, we have that $SC(\sigma)/SC(\mbox{\sc{Opt}}) = O\big(\frac{\rho_G}{\alpha + \rho_H}\big)$.
\end{lemma}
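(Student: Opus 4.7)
The plan is to bound $SC(\sigma)$ from above and $SC(\mbox{\sc{Opt}})$ from below, and then take the ratio. For the lower bound on $SC(\mbox{\sc{Opt}})$, I will use that, under our standing assumption of connected equilibria, $G(\mbox{\sc{Opt}})$ has at least $n-1$ edges, and that since $G(\mbox{\sc{Opt}}) \subseteq H$ every vertex has eccentricity at least $\rho_H$; together these yield $SC(\mbox{\sc{Opt}}) \geq \alpha(n-1) + n\rho_H = \Omega(n(\alpha+\rho_H))$.

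For the upper bound on $SC(\sigma)$, I will exhibit, for each player $v$, a single cheap unilateral deviation whose cost can be compared with $C_v(\sigma)$ through the NE property. Fix a center $c$ of $G$ and let $T$ be a BFS tree of $G$ rooted at $c$, so that $T$ has depth $\rho_G$ and, by the triangle inequality through $c$, every vertex has eccentricity at most $2\rho_G$ in $T$. The deviation for $v$ will be $\sigma'_v := N_T(v)$, i.e., $v$ simply buys all of its neighbors in $T$. The crucial observation is that in the resulting graph $G'$ the whole tree $T$ is preserved: edges of $T$ incident to $v$ are bought by $v$ via $\sigma'_v$, while a unilateral deviation by $v$ does not affect any edge disjoint from $v$. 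Hence $\varepsilon_{G'}(v) \leq \varepsilon_T(v) \leq 2\rho_G$, and the NE condition yields $C_v(\sigma) \leq C_v(\sigma') \leq \alpha\, \deg_T(v) + 2\rho_G$. Summing over all $v$ and using $\sum_v \deg_T(v) = 2(n-1)$ gives $SC(\sigma) \leq 2\alpha(n-1) + 2n\rho_G = O(n(\alpha+\rho_G))$.

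Combining the two estimates, $SC(\sigma)/SC(\mbox{\sc{Opt}}) = O((\alpha+\rho_G)/(\alpha+\rho_H)) = O(1 + \rho_G/(\alpha+\rho_H))$, which matches the claimed $O(\rho_G/(\alpha+\rho_H))$ bound since the price of anarchy is always at least $1$.

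The delicate point is the verification that $T \subseteq G'$: this amounts to carefully separating the edges incident to $v$ from those that are not, noting that the former are explicitly (re)purchased via $\sigma'_v$ while the latter are untouched by a unilateral change of $\sigma_v$, so every tree edge survives into $G'$. Everything else is routine counting combined with the standard NE inequality.
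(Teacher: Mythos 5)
Your proof is correct and follows essentially the same route as the paper's: a shortest-path tree $T$ rooted at a center of $G$, the NE inequality applied to a deviation in which $v$ buys only tree edges (giving $C_v(\sigma)\le \alpha\deg_T(v)+2\rho_G$, with the check that $T$ survives the unilateral deviation), and the lower bound $SC(\mbox{\sc{Opt}})\ge\alpha(n-1)+n\rho_H$. The only difference is that the paper has $v$ retain just the $k_v$ tree edges it already owns rather than all of $N_T(v)$, so that $\sum_v k_v=n-1$ instead of $\sum_v\deg_T(v)=2(n-1)$; this saves a factor of $2$ in the $\alpha$-term but is asymptotically irrelevant, and your final passage from $O(1+\rho_G/(\alpha+\rho_H))$ to the stated bound is exactly as (im)precise as the paper's own last step.
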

\begin{proof}
Let $u$ be a center of $G$, and let $T$ be a shortest path tree of $G$ rooted at $u$. Clearly, the diameter of $T$ is at most $2 \, \rho_G$. Now, for every node $v$, let us denote by $k_v$ the number of edges of $T$ bought by $v$ in $\sigma$. The key argument is that if a node $v$ bought only the $k_v$ edges of $T$, its eccentricity would be at most $\varepsilon_T(v) \le 2 \, \rho_G$. Hence, since $\sigma$ is a NE, we have that $C_v(\sigma)\le \alpha k_v + 2 \, \rho_G$. By summing up the inequalities over all nodes, we obtain
$$
SC(\sigma)=\sum_v C_v(\sigma) \le \alpha \sum_{v} k_v + 2 n \, \rho_G= \alpha (n-1)+ 2n \, \rho_G.
$$
\noindent Now, since $SC(\mbox{\sc{Opt}}) \ge \alpha (n-1)+ n \, \rho_H$, we have
$$
\frac{SC(\sigma)}{SC(\mbox{\sc{Opt}})}\le \frac{\alpha (n-1)+ 2n \, \rho_G}{\alpha (n-1)+ n \, \rho_H}\le 1+ \frac{2 n \, \rho_G}{\alpha (n-1) + n \, \rho_H}=O\Big(\frac{\rho_G}{\alpha + \rho_H}\Big).
\displayqed
$$
\end{proof}

As an immediate consequence, we obtain the following:

\begin{theorem}
For $\alpha=O(n)$, the $PoA$ is $O(\frac{n}{\alpha+\rho_H})$.
\qed
\end{theorem}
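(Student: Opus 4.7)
The theorem is essentially a one-line corollary of Lemma \ref{lm:UB_aux}, so the plan is to identify the only missing ingredient, namely a worst-case bound on the radius $\rho_G$ of an arbitrary NE graph $G$, and to feed it into the lemma.

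First I would recall that, in the regime we focus on, any NE $G=G(\sigma)$ we consider is connected (as discussed in the introduction, via the mild modification of the cost function or the assumption of a connected initial configuration embedded in $H$). A connected graph on $n$ vertices trivially satisfies $\rho_G \le n-1$, which is the only fact about $\rho_G$ that I need.

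Then I would apply Lemma \ref{lm:UB_aux} to an arbitrary NE $\sigma$: it gives
\[
\frac{SC(\sigma)}{SC(\mbox{\sc{Opt}})} = O\!\left(\frac{\rho_G}{\alpha + \rho_H}\right) = O\!\left(\frac{n}{\alpha + \rho_H}\right),
\]
where the second equality uses $\rho_G \le n-1 = O(n)$. Since this bound holds uniformly over all NE, taking the supremum yields the desired $PoA = O\!\big(n/(\alpha + \rho_H)\big)$.

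There is no real obstacle here: the entire argument is carried by Lemma \ref{lm:UB_aux}, and the only thing to check is the trivial radius bound on a connected NE graph. If one wanted to be slightly more careful, one could note that the bound is meaningful precisely in the regime $\alpha = O(n)$ assumed in the hypothesis, since otherwise the denominator already swallows $\rho_G$ and the ratio is trivially $O(1)$.
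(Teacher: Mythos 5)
Your proof is correct and is essentially the paper's own argument: the paper presents this theorem as an immediate consequence of Lemma \ref{lm:UB_aux}, and the only ingredient you supply --- the trivial bound $\rho_G \le n-1$ for a connected equilibrium graph --- is exactly what is implicitly being used. Nothing is missing.
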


Another interesting consequence of Lemma \ref{lm:UB_aux} concerns sparse host graphs:

\begin{theorem}\label{th:sparse host graph}
If the host graph $H$ has $n-1+h$ edges, and $h=O(n)$, then the $PoA$ is $O(h+1)$.
\end{theorem}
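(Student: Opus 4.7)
The plan is to invoke Lemma~\ref{lm:UB_aux}, which gives $PoA\le O(\rho_G/(\alpha+\rho_H))$ for $\alpha=O(n)$; the complementary regime $\alpha\ge n$ is handled by the earlier observation that any NE is then essentially a tree (so its radius is at most a constant multiple of $\rho_H$, directly giving $PoA=O(1)$). It thus suffices to establish that in any connected NE $G$,
\[
\rho_G=O\bigl((h+1)(\alpha+\rho_H)\bigr),
\]
which combined with the lemma yields the claimed $O(h+1)$ bound.

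The first ingredient is a Lipschitz-type consequence of the NE condition: for every $(u,v)\in E(H)$, $\varepsilon_G(u)\le \alpha+1+\varepsilon_G(v)$. If $(u,v)\in E(G)$ this is a triangle inequality; if $(u,v)\in E(H)\setminus E(G)$, the fact that $u$ has no incentive to buy $(u,v)$ gives $\varepsilon_G(u)-\varepsilon_{G+(u,v)}(u)\le \alpha$, which combined with $\varepsilon_{G+(u,v)}(u)\le 1+\varepsilon_G(v)$ yields the claim. The same reasoning also gives the detour bound $d_G(u,v)\le \alpha+1+\varepsilon_G(v)$ whenever $(u,v)\in E(H)\setminus E(G)$, since $d_G(u,v)\le \varepsilon_G(u)$.

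The second ingredient is sparsity: since $G$ is a connected spanning subgraph of $H$ with $|E(G)|\ge n-1$ and $|E(H)|=n-1+h$, at most $h$ edges of $H$ are missing from $G$. Fixing a center $c$ of $H$ and a BFS tree $T_H$ of $H$ rooted at $c$, at most $h$ tree edges are missing in $G$. Letting $w$ be the vertex farthest from $c$ in $G$, the $c$-to-$w$ path in $T_H$ has length at most $\rho_H$ and contains at most $h$ missing edges. Reconstructing $d_G(c,w)$ by traversing this path in $G$ — each present tree edge contributing $1$, each missing edge replaced by a detour bounded via the NE inequality above — and controlling the eccentricities appearing in the detour bounds by iterated Lipschitz against the remaining path suffix and $\rho_H$, I expect to obtain $\varepsilon_G(c)=O((h+1)(\alpha+\rho_H))$. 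Since $\rho_G\le \varepsilon_G(c)$, the claim follows.

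The main obstacle is this amortized accounting of the $h$ detours: naively bounding each $\varepsilon_G(v_i)$ appearing in a detour by the uniform Lipschitz estimate $\varepsilon_G(c)+(\alpha+1)\rho_H$ plugs a positive multiple of $\varepsilon_G(c)$ into each of the $h$ detours, producing a vacuous $(1-h)\varepsilon_G(c)\le\ldots$ inequality. Overcoming this requires traversing the $T_H$-path in a specific order (e.g.\ from $w$ towards $c$) and bounding the eccentricity appearing in each detour against the length of the remaining suffix rather than against $\varepsilon_G(c)$ itself, so that the $h$ detour contributions telescope cleanly into a total $O(h(\alpha+\rho_H))$.
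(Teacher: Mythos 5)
Your high-level plan (reduce everything to a bound on $\rho_G$ and then invoke Lemma~\ref{lm:UB_aux}) matches the paper's, and your separate treatment of the regime $\alpha \ge n$ is fine. The gap is in the one step that actually matters: you never establish $\rho_G = O\big((h+1)(\alpha+\rho_H)\big)$, and the mechanism you propose for it cannot be made to work. The equilibrium condition only yields the Lipschitz bound $\varepsilon_G(u) \le \varepsilon_G(v) + 1 + \alpha$ for $(u,v) \in E(H)$; it gives no bound on the detour $d_G(u,v)$ for a missing edge beyond $d_G(u,v) \le \varepsilon_G(u)$, and every eccentricity satisfies $\varepsilon_G(\cdot) \ge \rho_G$. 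Hence each of the $h$ detours contributes at least $\rho_G$ to the right-hand side, which is exactly the vacuous inequality you flag yourself. Your proposed repair --- bounding the eccentricity appearing in each detour ``against the length of the remaining suffix'' --- cannot succeed, because $\varepsilon_G(v)$ is a global quantity, lower-bounded by $\rho_G$ no matter where $v$ sits on the $T_H$-path; there is no traversal order under which these terms telescope into $O\big(h(\alpha+\rho_H)\big)$.

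The missing ingredient is that the required diameter bound is purely structural and needs no game-theoretic input at all. The paper cites the result of Schoone, Bodlaender and van Leeuwen \cite{SBvL87}: a connected graph obtained from a supergraph of diameter $\delta$ by deleting $\ell$ edges has diameter at most $(\ell+1)\delta$. Since any connected equilibrium $G$ misses at most $h$ edges of $H$, this immediately gives $\delta_G \le (h+1)\delta_H \le 2(h+1)\rho_H$, and Lemma~\ref{lm:UB_aux} then yields $O\big((h+1)\rho_H/(\alpha+\rho_H)\big) = O(h+1)$. Note that the cited bound is itself not the naive iteration of the single-edge-deletion argument (which would only give $2^{\ell}\delta$), so even as a purely structural statement it is something you would have to prove or cite rather than reconstruct from the equilibrium condition.
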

\begin{proof}
Let $G=G(\sigma)$ be an equilibrium network. Since $G$ must be connected, we have that $|E(H) \setminus E(G)| \le h$. This is sufficient to provide an upper bound to the diameter of $G$. Indeed, in \cite{SBvL87} it is shown that the diameter of a connected graph obtained from a supergraph of diameter $\delta$ by deleting $\ell$ edges is at most $(\ell+1)\delta$. This implies that in our case $\delta_G \le (1+h) \delta_H$. Now, the claim follows from Lemma \ref{lm:UB_aux}. \qed
\end{proof}

Theorem \ref{th:sparse host graph} implies that, for very sparse host graphs $H$, i.e., $|E(H)|=n-1+h$ and $h=O(1)$, we have that the PoA is $O(1)$, for any $\alpha$.

Next theorem shows that the PoA is upper bounded by 2 when $\alpha$ is at least $n$. A similar result has already been proved in \cite{DHM07} for the case in which $H$ is a complete graph. In fact, it turns out that the same proof extends to any host graph $H$. 

\begin{theorem}[\cite{DHM07}]
For $\alpha \ge n$, the $PoA$ is at most $2$.
\end{theorem}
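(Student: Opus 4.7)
The plan is to show, following the proof strategy of \cite{DHM07}, that any Nash equilibrium $G=G(\sigma)$ is a spanning tree of $H$, and then compare social costs directly. The fact that the host graph is not complete plays no role in either step, which is why the argument extends.

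First I would argue that no NE contains a cycle. Suppose for contradiction that $G$ has a cycle and let $e$ be an edge on that cycle owned by some player $v$. Removing $e$ leaves $G$ connected, so the eccentricity of $v$ in the new graph is at most $n-1$. Thus by deviating to its current strategy minus $e$, player $v$ saves $\alpha$ in building cost and increases its routing cost by at most $(n-1) - \varepsilon_G(v) \leq n-1$. Since $\alpha \geq n > n-1$, this is a strict improvement, contradicting that $\sigma$ is a NE. Hence $G$ is a spanning tree of $H$, with exactly $n-1$ edges in total.

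Next I would upper bound the social cost of the equilibrium. Since $G$ is a tree on $n$ vertices, every eccentricity is at most $\delta_G \leq n-1$, and the total building cost is exactly $\alpha(n-1)$. Therefore
\[
SC(\sigma) \leq \alpha(n-1) + n(n-1).
\]
On the other hand, any feasible strategy profile produces a connected spanning subgraph of $H$, so its building cost is at least $\alpha(n-1)$, yielding $SC(\mbox{\sc{Opt}}) \geq \alpha(n-1)$. Dividing and using $\alpha \geq n$ gives
\[
\frac{SC(\sigma)}{SC(\mbox{\sc{Opt}})} \leq 1 + \frac{n(n-1)}{\alpha(n-1)} = 1 + \frac{n}{\alpha} \leq 2,
\]
which is the desired bound. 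There is no real obstacle here: the only point that required checking was that the ``remove an edge from a cycle'' deviation is always available independently of $H$, which is immediate because removing an edge from a cycle of $G$ cannot create any edge outside $E(G) \subseteq E(H)$.
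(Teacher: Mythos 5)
Your proof is correct and is precisely the argument the paper has in mind: the theorem is stated without proof because, as you verify, the argument of \cite{DHM07} (for $\alpha\ge n$ every equilibrium is a spanning tree, hence $SC(\sigma)\le \alpha(n-1)+n(n-1)\le 2\,SC(\textsc{Opt})$) is independent of the host graph. The only half-line worth adding is that ``building cost exactly $\alpha(n-1)$'' also uses the standard fact that at equilibrium no edge is bought by both endpoints (otherwise one buyer drops it, leaving the graph unchanged and saving $\alpha>0$).
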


We end this section by showing that when either $\alpha$ is small, or the host graph has small diameter, every stable tree (if any) is a good equilibrium. This generalizes a result in \cite{MS10} given for complete host graphs, which states that the social cost of every acyclic equilibrium is $O(1)$ times the optimum.

\begin{lemma}\label{lm:ecc(u)-ecc(v)<=1+a}
Let $(u,v) \in  E(H)$ be an edge of the host graph. Then, for every stable graph $G$, we have $|\varepsilon_G(u) - \varepsilon_G(v) | \le 1+\alpha$.
\end{lemma}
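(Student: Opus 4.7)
The plan is to analyze two cases depending on whether the edge $(u,v)$ of the host graph is actually present in the equilibrium graph $G$, and in each case produce the desired bound $|\varepsilon_G(u)-\varepsilon_G(v)|\le 1+\alpha$. Throughout, let $\sigma$ be a NE with $G=G(\sigma)$.

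\emph{Easy case.} If $(u,v)\in E(G)$, then $d_G(u,v)=1$ and for every vertex $w\in V$ the triangle inequality yields $|d_G(u,w)-d_G(v,w)|\le 1$. Taking the maximum over $w$ on each side gives $|\varepsilon_G(u)-\varepsilon_G(v)|\le 1\le 1+\alpha$, which is more than enough.

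\emph{Main case.} Suppose $(u,v)\notin E(G)$. Consider the deviation in which player $u$ keeps its strategy $\sigma_u$ and additionally buys the edge $(u,v)$; call the resulting profile $\sigma'$ and let $G'=G(\sigma')$. For every $w\in V$ we have $d_{G'}(u,w)\le 1+d_G(v,w)$ by following the new edge from $u$ to $v$, and taking the maximum over $w$ gives $\varepsilon_{G'}(u)\le 1+\varepsilon_G(v)$. Since $\sigma$ is a NE, $u$'s current cost is at most its cost under $\sigma'$, i.e.
\begin{equation*}
\alpha\,|\sigma_u|+\varepsilon_G(u)\;\le\;\alpha\,(|\sigma_u|+1)+\varepsilon_{G'}(u)\;\le\;\alpha\,|\sigma_u|+\alpha+1+\varepsilon_G(v),
\end{equation*}
whence $\varepsilon_G(u)-\varepsilon_G(v)\le 1+\alpha$. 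The symmetric deviation (where player $v$ additionally buys $(u,v)$) yields $\varepsilon_G(v)-\varepsilon_G(u)\le 1+\alpha$ by the same argument, and combining the two inequalities proves the lemma.

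The main (and only) subtlety is the choice of deviation: we have to use the \emph{one-edge augmentation} move for each of $u$ and $v$ in turn, and bound the new eccentricity by a one-step triangle inequality through the newly bought edge. Everything else is immediate from the NE property and from the fact that $(u,v)\in E(H)$, which is exactly what legitimizes the deviation.
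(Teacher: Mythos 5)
Your proof is correct and follows essentially the same argument as the paper: the key step in both is the one-edge augmentation deviation, with the eccentricity of the deviating player bounded by $\varepsilon_G(v)+1$ via the new edge and the NE condition forcing the saving to be at most $\alpha$. The only cosmetic difference is that the paper fixes $\varepsilon_G(u)\ge\varepsilon_G(v)$ w.l.o.g.\ and considers a single deviation, whereas you run the symmetric deviation for both endpoints.
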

\begin{proof}
W.l.o.g. assume $\varepsilon_G(u) \ge \varepsilon_G(v)$. If $(u,v) \in E(G)$, then the claim trivially holds. Otherwise, if $u$ buys the edge $(u,v)$ then its eccentricity will decrease at least by $\varepsilon_G(u) - \varepsilon_G(v)-1$, while its building cost will increase by $\alpha$. Since $G$ is stable, we have $\varepsilon_G(u) - \varepsilon_G(v)-1 \le \alpha$, and the claim follows. \qed
\end{proof}

\begin{corollary}\label{crl:ecc(u)-ecc(v)<=1+a dH}
For every $u,v \in V$ and for every stable graph $G$, it holds that $| \varepsilon_G(u) - \varepsilon_G(v) | \le (1+\alpha) \, d_H(u,v)$.
\end{corollary}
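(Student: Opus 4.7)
The plan is to prove the corollary by a straightforward telescoping argument along a shortest path in the host graph $H$, invoking Lemma \ref{lm:ecc(u)-ecc(v)<=1+a} on each edge of that path.

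More precisely, let $u,v \in V$ and set $k = d_H(u,v)$. I would fix a shortest path $u = w_0, w_1, \ldots, w_k = v$ in $H$, so that $(w_i, w_{i+1}) \in E(H)$ for every $i \in \{0,\ldots,k-1\}$. Since $G$ is stable, Lemma \ref{lm:ecc(u)-ecc(v)<=1+a} applies to each such edge and gives $|\varepsilon_G(w_i) - \varepsilon_G(w_{i+1})| \le 1+\alpha$.

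Combining these bounds through the triangle inequality for absolute values yields
\[
|\varepsilon_G(u) - \varepsilon_G(v)| \le \sum_{i=0}^{k-1} |\varepsilon_G(w_i) - \varepsilon_G(w_{i+1})| \le k (1+\alpha) = (1+\alpha)\, d_H(u,v),
\]
which is exactly the statement to be proved.

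There is essentially no obstacle here: the only subtle point is to notice that the inequality of Lemma \ref{lm:ecc(u)-ecc(v)<=1+a} applies to \emph{any} pair of endpoints of an edge in $H$ (regardless of whether that edge is actually bought in $G$), which is precisely how the lemma is stated. The corollary is therefore a direct lifting of that single-edge bound to host-graph distances via a standard path-concatenation argument.
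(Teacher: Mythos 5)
Your proof is correct and matches the intended argument: the paper states the corollary without proof precisely because it follows from Lemma \ref{lm:ecc(u)-ecc(v)<=1+a} by the standard telescoping along a shortest $u$--$v$ path in $H$, which is exactly what you do. Your remark that the lemma applies to edges of $H$ regardless of whether they are bought in $G$ is the right observation.
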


\begin{lemma}\label{lm:ecc(u)-ecc(v) -> O(1+a)}
Let $G=G(\sigma)$ be a stable graph. If there are two nodes $u,v \in V$ such that $\varepsilon_G(v) \ge c\cdot \varepsilon_G(u) + k$ with $c > 1$ and $k\in \mathbb{R}$, then $\frac{\delta_G}{\delta_H} \le 2 \cdot \frac{1+\alpha - \frac{k}{\delta_H}}{c-1}$.
\end{lemma}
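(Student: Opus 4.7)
The plan is to combine the hypothesis with Corollary~\ref{crl:ecc(u)-ecc(v)<=1+a dH}, which already gives us a Lipschitz-type control on eccentricities in terms of host-graph distances. Since $d_H(u,v) \le \delta_H$ for any pair of vertices, the corollary yields $\varepsilon_G(v) - \varepsilon_G(u) \le (1+\alpha)\,\delta_H$.

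On the other hand, the hypothesis $\varepsilon_G(v) \ge c\,\varepsilon_G(u) + k$ rewrites as
$$\varepsilon_G(v) - \varepsilon_G(u) \ge (c-1)\,\varepsilon_G(u) + k.$$
Chaining the two inequalities and solving for $\varepsilon_G(u)$ (using $c > 1$) gives
$$\varepsilon_G(u) \le \frac{(1+\alpha)\,\delta_H - k}{c-1}.$$

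To convert this into a bound on $\delta_G$, I would invoke the standard graph-theoretic fact that the diameter is at most twice the radius, together with the trivial inequality $\rho_G \le \varepsilon_G(u)$. This yields $\delta_G \le 2\rho_G \le 2\,\varepsilon_G(u)$, and substituting the bound above and dividing both sides by $\delta_H$ produces exactly the desired inequality
$$\frac{\delta_G}{\delta_H} \le 2\cdot \frac{1+\alpha-\frac{k}{\delta_H}}{c-1}.$$

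There is no real obstacle here: the only subtlety is remembering that the corollary bounds the eccentricity gap by $(1+\alpha)\,d_H(u,v)$ rather than by the ambient diameter, so the step of passing to $\delta_H$ must be made explicit. The entire argument is thus a short algebraic manipulation once Corollary~\ref{crl:ecc(u)-ecc(v)<=1+a dH} and the diameter/radius inequality are in hand.
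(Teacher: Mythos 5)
Your proof is correct and follows essentially the same route as the paper's: both arguments combine the rewritten hypothesis $\varepsilon_G(v)-\varepsilon_G(u)\ge (c-1)\varepsilon_G(u)+k$ with Corollary~\ref{crl:ecc(u)-ecc(v)<=1+a dH} and the chain $\delta_G \le 2\rho_G \le 2\varepsilon_G(u)$, differing only in the order in which the algebra is carried out. Nothing to add.
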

\begin{proof}
We have
\begin{eqnarray*}
		\varepsilon_G(v) - \varepsilon_G(u) & \ge & c\cdot \varepsilon_G(u) + k - \varepsilon_G(u) \\
		& \ge & (c-1) \varepsilon_G(u) + k \ge (c-1) \rho_G + k \ge  (c-1) \frac{1}{2}\delta_G + k.
	\end{eqnarray*}
	
	\noindent Moreover, from Corollary \ref{crl:ecc(u)-ecc(v)<=1+a dH}, we have
	\begin{eqnarray*}
		\varepsilon_G(v) - \varepsilon_G(u) & \le & (1+\alpha) d_H(u,v) \le (1+\alpha) \delta_H,
	\end{eqnarray*}
	
	\noindent from which, we obtain
	\begin{eqnarray*}
		(c-1) \frac{1}{2}\delta_G + k & \le & (1+\alpha) \delta_H
	\end{eqnarray*}
\hide{\\
		\frac{\delta_G}{\delta_H} & \le & 2 \cdot \frac{1+\alpha - \frac{k}{\delta_H}}{c-1}

}

\noindent and hence the claim. \qed
\end{proof}

We are now ready to give the following

\begin{theorem}
Let $\sigma$ be a NE such that $G=G(\sigma)$ is a tree. Then, $\frac{SC(\sigma)}{SC(\textsc{Opt})}\le \min \{O(1+\alpha),O(\rho_H)\}$.
\end{theorem}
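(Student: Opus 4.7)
The plan is to observe that, since $G$ is a tree, it has exactly $n-1$ edges, so the building cost is fixed at $\alpha(n-1)$ and the ratio boils down to controlling the total routing cost $\sum_v \varepsilon_G(v) \le n\,\delta_G$. On the denominator side, $SC(\textsc{Opt}) \ge \alpha(n-1) + n\,\rho_H$: the edge term comes from the fact that any feasible solution must be connected and hence has at least $n-1$ edges, while the routing term follows from $\varepsilon_{G(\textsc{Opt})}(v) \ge \varepsilon_H(v) \ge \rho_H$ for every $v$ (by definition of the radius as the minimum eccentricity in $H$). Hence it suffices to upper bound $\delta_G$ in terms of $(1+\alpha)\,\rho_H$.

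The diameter bound goes through Corollary~\ref{crl:ecc(u)-ecc(v)<=1+a dH}. Taking a peripheral node $a$ of $G$ (with $\varepsilon_G(a)=\delta_G$) and a center $u$ of $G$ (with $\varepsilon_G(u)=\rho_G$), we get $\delta_G-\rho_G \le (1+\alpha)\,d_H(a,u) \le (1+\alpha)\,\delta_H \le 2(1+\alpha)\,\rho_H$. The tree hypothesis enters via the stronger relation $\rho_G \le \lceil \delta_G/2 \rceil$, so that $\delta_G-\rho_G \ge (\delta_G-1)/2$; combining the two inequalities gives $\delta_G = O\bigl((1+\alpha)\,\rho_H\bigr)$. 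Alternatively, one can apply Lemma~\ref{lm:ecc(u)-ecc(v) -> O(1+a)} directly with $c=2$ and $k=-1$, exploiting $\delta_G\ge 2\rho_G-1$ in a tree; either route arrives at the same estimate.

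Plugging the diameter bound into $SC(\sigma) \le \alpha(n-1) + n\,\delta_G$ and comparing with the lower bound on $SC(\textsc{Opt})$, the ratio becomes $\frac{\alpha(n-1) + O\bigl(n(1+\alpha)\rho_H\bigr)}{\alpha(n-1)+n\rho_H}$. Splitting $(1+\alpha)\rho_H = \rho_H + \alpha\rho_H$, the $\rho_H$ piece contributes $O(1)$ because it is absorbed by the $n\rho_H$ term of the denominator; the $\alpha\rho_H$ piece is at most $O(\alpha)$ when measured against the $n\rho_H$ term and at most $O(\rho_H)$ when measured against the $\alpha(n-1)$ term, yielding the claimed $\min\{O(1+\alpha),O(\rho_H)\}$ bound. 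The only genuine obstacle is the diameter estimate: once $\delta_G = O\bigl((1+\alpha)\rho_H\bigr)$ is in hand, the rest is a straightforward ratio computation, and the tree property enters exclusively to relate $\rho_G$ and $\delta_G$ tightly, which is what fails for general equilibria and allows the bad PoA bounds proved elsewhere in the paper.
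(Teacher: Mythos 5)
Your proposal is correct and follows essentially the same route as the paper: the paper likewise combines the tree relation $\delta_G \ge 2\rho_G-1$ with Lemma~\ref{lm:ecc(u)-ecc(v) -> O(1+a)} (itself a packaging of Corollary~\ref{crl:ecc(u)-ecc(v)<=1+a dH}) to get $\delta_G = O\bigl((1+\alpha)\rho_H\bigr)$, and then concludes via the ratio bound of Lemma~\ref{lm:UB_aux}, which is exactly the $\frac{\alpha(n-1)+O(n\delta_G)}{\alpha(n-1)+n\rho_H}$ computation you carry out by hand. The only cosmetic difference is that you redo that final computation directly (using that a tree has exactly $n-1$ edges) rather than citing the lemma.
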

\begin{proof}
Let us consider a center $u$ of $G$, and let $v$ be a node in the periphery of $G$, namely $\varepsilon_G(v)=\delta_G$. Since $G$ is a tree, we have $\varepsilon_G(v)=\delta_G \ge 2 \, \rho_G -1 = 2 \, \varepsilon_G(u) -1$. Now, using Lemma \ref{lm:ecc(u)-ecc(v) -> O(1+a)} and Lemma \ref{lm:UB_aux}, the claim follows. \qed
\end{proof}

\section{Lower bounds}

In this section we prove some lower bounds to the PoA of the game, as summarized in Table \ref{LB}.

\begin{table}
\begin{center}
\begin{tabular}{|l||c|c|c|}
\hline
{$\alpha$}& $O(\sqrt[3]{n})$ &$O(\sqrt{n})$ & $\Omega(\sqrt{n})$\\
\hline
PoA&$\Omega\left(\sqrt{\frac{n}{1+\alpha}}\right)$ & $\Omega(\alpha)$ &$\Omega\left(1+ \frac{n}{\alpha}\right)$\\
\hline
\end{tabular}
\label{LB}
\end{center}
\caption{Obtained lower bounds to the PoA.}
\end{table}

Before getting to the technical details, let us discuss the significance of the above bounds. First of all, we notice that the lower bound for $\alpha = \Omega(\sqrt{n})$ is tight, due the upper bound given in the previous section. Moreover, observe that we can obtain such a lower bound for two prominent classes of host graphs, namely for $k$-regular graphs (for any constant $k \geq 3$)  and for 2-dimensional grids.\footnote{Notice that a 2-dimensional grid is also planar and bipartite.} We view this as a meaningful result, due to the practical relevance of such host topologies.
Concerning the case $\alpha \in \Omega(\sqrt[3]{n}) \cap O(\sqrt{n})$, we notice that the lower bound still holds for the classes of $k$-regular graphs (for any constant $k\geq 3$) and 2-dimensional grids, but now it is not tight.  Finally, the lower bound of $\Omega(\sqrt{\frac{n}{1+\alpha}})$ for $\alpha = O(\sqrt[3]{n})$ is quite because it implies a very large lower bound of $\Omega(\sqrt{n})$ for tiny values of $\alpha$. Summarizing, we point out that we get a polynomial lower bound for $\alpha=O(n^{1-\varepsilon})$, for any $\varepsilon >0$, in strong contrast with the almost everywhere constant upper bound to the PoA of \textsc{Max}NCG.

\begin{theorem}
If the host graph is a 2-dimensional square grid, then the PoA is $\Omega\big(1+\min\{\alpha,\frac{n}{\alpha}\}\big)$.
\end{theorem}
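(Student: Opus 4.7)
The plan is to construct an explicit NE $\sigma$ in the $k\times k$ grid (with $k=\sqrt n$) whose social cost realizes the claimed lower bound. For $\alpha\le k$, I would take $G(\sigma)$ to contain every horizontal edge of the host together with, between each pair of consecutive rows $i$ and $i+1$, vertical edges at a set $C_i\subseteq [1,k]$ of columns spaced $\alpha$ apart; crucially, $C_i$ is offset by $\alpha/2$ between consecutive barriers (e.g., odd $i$ uses columns $1,1+\alpha,1+2\alpha,\ldots$ and even $i$ uses $1+\alpha/2,1+3\alpha/2,\ldots$). For $\alpha>k$ the same scheme degenerates to a single crossing per barrier alternating between columns $1$ and $k$: this is the classical snake (Hamilton path) in the grid.

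The diameter of $G(\sigma)$ would then be $\Omega(k(1+\alpha))$ for $\alpha\le k$ and $\Omega(n)$ for $\alpha\ge k$. Indeed, any path going from row $i$ to row $i'$ in $G$ must use at least $|i-i'|$ vertical edges, and between any barrier-$i$ crossing and any barrier-$(i+1)$ crossing the offset between $C_i$ and $C_{i+1}$ forces at least $\alpha/2$ horizontal steps within the intervening row. Summing these contributions gives a distance bound of $|i-i'|+(|i-i'|-1)\cdot \alpha/2$, so every vertex $v=(i,j)$ has eccentricity $\Omega(k\alpha)$. Hence $SC(\sigma)\ge \sum_v \varepsilon_G(v) = \Omega(\alpha n\sqrt n)$ when $\alpha\le k$ and $\Omega(n^2)$ when $\alpha\ge k$; dividing by $SC(\opt)=\Theta(\alpha n + n\sqrt n)$ yields the promised $\Omega(\min\{\alpha, n/\alpha\})$ ratio in both regimes.

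To verify $\sigma$ is a NE I would use the elementary observation that if $v$ adds a single edge $(v,u)\in E(H)\setminus E(G)$, then for any $w$ realizing $\varepsilon_G(v)$ one has $d_{G+(v,u)}(v,w)\ge \varepsilon_G(v)-d_G(v,u)+1$, so $v$'s eccentricity drops by at most $d_G(v,u)-1$. In the construction every non-$G$ host edge is a vertical edge at a non-crossing column, and its endpoints are connected in $G$ via the nearest barrier-$i$ crossing at cost $\le 2\cdot(\alpha/2)+1=\alpha+1$; hence a single-edge purchase saves at most $\alpha$, matching the $\alpha$ building cost. Iterating edge by edge, any multi-edge addition of $t$ edges saves at most $t\alpha$, still matching the $t\alpha$ building-cost increase. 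The main obstacle is edge removals and remove-plus-add swaps: to handle these I plan to assign each bought edge to an endpoint whose shortest path to some eccentricity-realizing vertex traverses that edge, so that removing it forces an $\Omega(\alpha)$ detour through the nearest remaining crossing. The real care needed is for multiple simultaneous removals, whose eccentricity losses might not combine additively; this can be handled by a careful case analysis or, alternatively, by refining $G(\sigma)$ into a spanning-tree structure so that removing any owned edge disconnects $G$ and is therefore forbidden by the connectivity assumption.
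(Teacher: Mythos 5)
Your cost accounting (eccentricity $\Omega(\sqrt{n}\,\alpha)$ per vertex versus $SC(\opt)=\Theta(\alpha n+n\sqrt{n})$) is fine, but the heart of the proof --- verifying that $\sigma$ is a Nash equilibrium --- is missing, and for the specific graph you propose it actually fails. Your $G(\sigma)$ keeps \emph{every} horizontal edge plus several vertical crossings per barrier, so it is far from a tree: each pair of consecutive crossings in a barrier, together with the two full rows between them, spans a cycle of length about $2\alpha+2$. Every edge of $G(\sigma)$ must be owned by somebody, so some player $v$ owns a non-cut horizontal edge $e$ lying strictly between two crossings. Removing $e$ keeps the graph connected, increases only distances of order $O(\alpha)$ (vertices reached through $e$ admit a detour around the local cycle, and all such vertices were at distance $O(\alpha)$ from $v$ to begin with), and leaves $v$'s eccentricity --- which is $\Theta(\sqrt{n}\,\alpha)$ and realized by far-away vertices reached by paths that simply go to the \emph{other} nearby crossing --- completely unchanged. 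Such a player strictly gains $\alpha$ by dropping $e$, so $\sigma$ is not an equilibrium for any $\alpha>0$. Your proposed fix of assigning each edge to an endpoint whose path to an eccentricity-realizing vertex traverses it cannot work for these edges (no such endpoint exists), and even where it does, a detour bounded by the cycle length need not cost the full $\alpha$ in eccentricity.

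Your fallback --- ``refine $G(\sigma)$ into a spanning tree so removals disconnect the graph'' --- is not an optional cleanup but the essential idea, and it is exactly what the paper does: it takes all $k$ column paths and joins consecutive columns alternately at row $1$ and at row $k^*=\min\{1+\lfloor\alpha/2\rfloor,k\}$, producing a serpentine \emph{tree} of radius $\Omega(\sqrt{n}\cdot\min\{1+\alpha,\sqrt{n}\})$, rooted at its center with each node buying the edges to its children. Note that even in the tree case, ``removals are forbidden by connectivity'' does not finish the argument: a player buying $t$ edges could still \emph{rewire} them (it must buy one edge into each child subtree, but possibly to a different vertex of that subtree) or buy additional edges. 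The paper handles the first case by observing that the eccentricity-realizing vertex lies outside all of the player's child subtrees (it is separated from them by the root), so rewiring cannot help, and the second by your own observation that each extra host edge joins vertices at tree distance at most $2(k^*-1)+1\le\alpha+1$, so each purchase saves at most $\alpha$. You would need to supply both of these steps; as written, the proposal establishes the cost ratio of a configuration that is not an equilibrium.
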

\begin{proof}
Let $k = 2p$ where $p$ is an odd number, and let $H$ be a 2-dimensional square grid of $n=k\times k$ vertices. In the rest of the proof, we assume that the vertex in the $i$-th row and $j$-th column of the grid is labeled with $\langle i,j\rangle$, where $1\leq i,j\leq k$.

For every $1\leq j\leq k$, let $P_j$ be the path in $H$ which spans all the vertices of the $j$-th column of $H$. Let $k^*=\min\{ 1 + \lfloor \frac{\alpha}{2}\rfloor,k\}$. Let $F$ be the set of edges linking vertex $\langle 1, j\rangle$ with vertex $\langle 1, j+1\rangle$ iff $j$ is even and let $F'$ be the set of edges linking vertex $\langle k^*, j\rangle$ with $\langle k^*, j+1\rangle$ iff $j$ is odd.

\begin{figure}
\center
\includegraphics{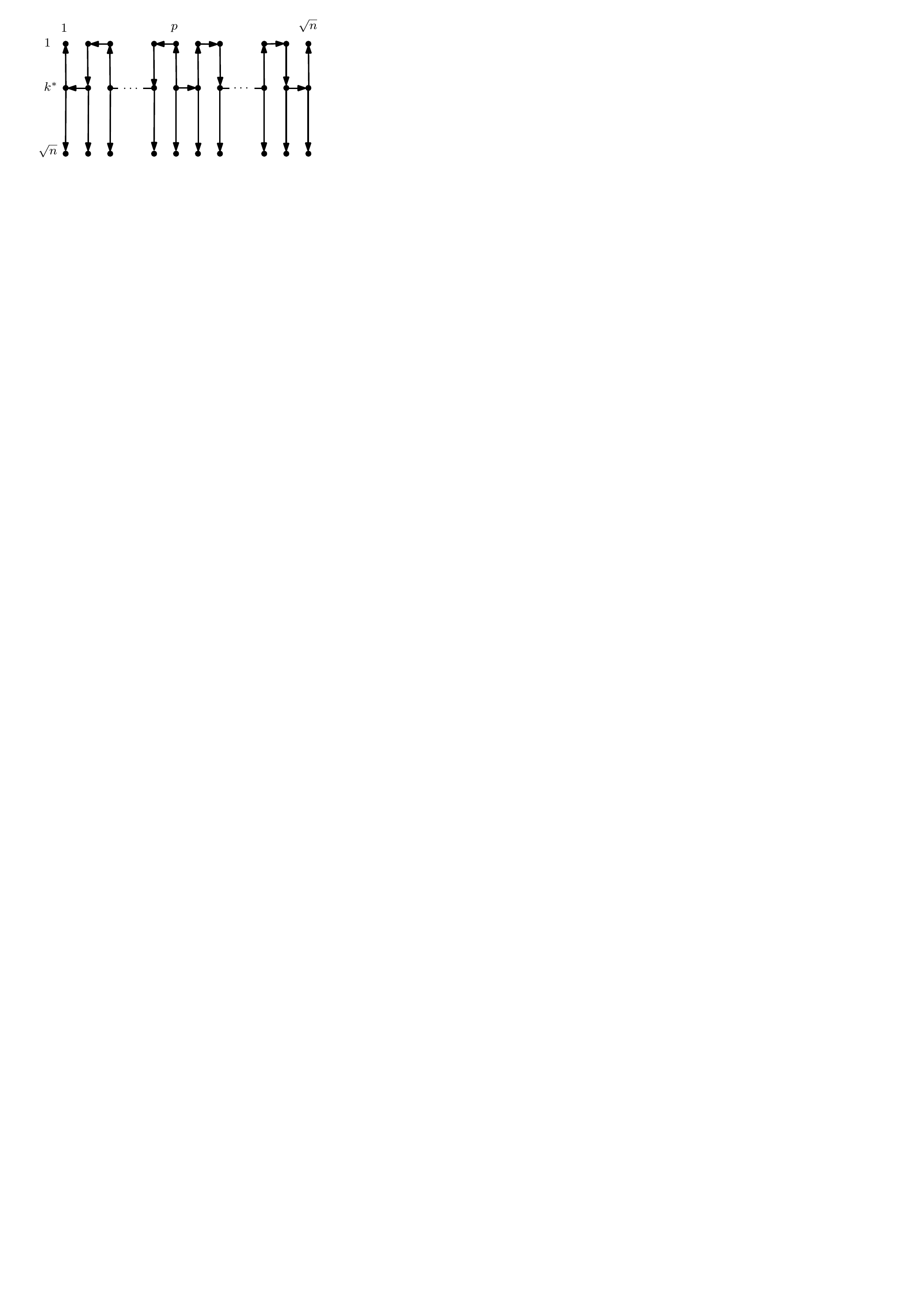}
\caption{The stable graph $G$ when the host graph $H$ is a square grid of $n$ vertices.}
\label{fig:grid}
\end{figure}

Let $G$ be the subgraph of $H$ whose edge set is $E(G)=F\cup F'\cup \bigcup_{j=1}^kE(P_j)$ (see also Figure \ref{fig:grid}). Observe that $G$ is a tree of radius greater than or equal to $\frac{1}{2}kk^*=\Omega\big(\sqrt{n} \cdot \min\{1+\alpha,\sqrt{n}\}\big)$. Observe also that $\langle k^*,p\rangle$ and $\langle k^*,p+1\rangle$ are the two centers of $G$. Let $\langle k^*,p\rangle$ be the root of $G$ and let $\bar{G}$ be the directed version of $G$ where all the root-to-leaf paths are directed towards the leaves. Finally, let $\sigma$ be the strategy profile induced by $\bar{G}$, i.e., each player $v$ is buying exactly the edges in $\bar{G}$ outgoing from $v$. Clearly, $G(\sigma)=G$.

We prove that $\sigma$ is a NE by showing that every vertex $\langle i,j\rangle$, with $1\leq i\leq k$ and $1\leq j\leq p$, is playing a best response strategy. Indeed, if we show that $\langle i,j\rangle$ is playing a best response strategy, then, by symmetry, also $\langle i, k-j+1 \rangle$ is playing a best response strategy.

Let $i$ and $j$ be two fixed integers such that $1\leq i\leq p$ and $1\leq j\leq k$ and let $t$ be the number of edges bought by $\langle i,j\rangle$ in $\sigma$.   Since $G$ is a tree and since $\langle k^*,p\rangle$ and $\langle k^*,p+1\rangle$ are the two centers of $G$, there exists a vertex $\langle i',j'\rangle$, with $1+p\leq i'\leq k$ and $1\leq j'\leq k$, such that the distance in $G$ from $\langle i,j\rangle$ to $\langle i',j'\rangle$ is exactly equal to the eccentricity of $\langle i,j\rangle$ in $G$. Observe also that the (unique) path in $G$ from $\langle i,j\rangle$ to $\langle i',j'\rangle$ traverses the root as well as the vertex $\langle k^*,p+1\rangle$. Let $\langle i',j'\rangle$ be any vertex such that $1+p\leq i'\leq k$ and $1\leq j'\leq k$. First of all, observe that if we add to $G$ all the edges adjacent to $\langle i,j\rangle$ in $H$, then the distance from $\langle i,j\rangle$ to $\langle i',j'\rangle$ decreases by at most $\alpha$. Since the cost of activating new links is at least $\alpha$, $\langle i,j\rangle$ cannot improve its cost function by buying more than $t$ edges. Now we prove that $\langle i,j\rangle$ cannot improve its cost function by buying at most $t$ edges. First of all, observe that $t$ is the minimum number of edges $\langle i,j\rangle$ has to buy to guarantee connectivity. Moreover, to guarantee connectivity, $\langle i,j\rangle$ has to buy an edge towards some vertex of every subtree of $G$ rooted at any of its $t$ children. Since the subtree of $G$ rooted at $\langle i,j\rangle$ does not contain $\langle i',j'\rangle$ when $\langle i,j\rangle$ is not the root, $\langle i,j\rangle$ cannot improve its eccentricity, and thus its cost function, by buying an edge towards some vertex of every subtree of $G$ rooted at any of its $t$ children. Furthermore, if $\langle i,j\rangle$ is the root of $G$, then $\langle i,j\rangle$ cannot improve its eccentricity, and thus its cost function, by buying an edge towards some vertex of every subtree of $G$ rooted at any of its $t$ children as $\langle i,j\rangle$ is already buying the unique edge of $H$ linking it to the subtree of $G$ rooted at $\langle k^*,1+p\rangle$.

To complete the proof, observe that $SC(\opt)$ is upper bounded by the social cost of $H$, i.e., $SC(\opt)= O\big(n(\alpha+\sqrt{n})\big)$. Since $SC(\sigma)\geq\alpha(n-1)+\frac{1}{2}kk^*n=\Omega\big(n^{3/2}\min\{1+\alpha,\sqrt{n}\}\big)$, we have that
$$
\frac{SC(\sigma)}{SC(\opt)}=\frac{\Omega\big(n^{3/2}\min\{1+\alpha,\sqrt{n}\}\big)}{O\big(n(\alpha+\sqrt{n})\big)}=
\Omega\big(1+\min\{\alpha,n/\alpha\}\big).
\displayqed
$$
\end{proof}

We now show that a similar lower bound holds also when the host graph is $k$-regular.

\begin{theorem}
	\label{thm:LB_PoA_k_regular}
	If the host graph is $k$-regular, with $3 \le k = o\left(\frac{n}{\alpha} \right)$, then the PoA is \linebreak $\Omega\big(1+\min\{\alpha,\frac{n}{\alpha k}\}\big)$.
\end{theorem}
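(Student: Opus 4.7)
The argument follows the template of the 2-dimensional grid case, except that now the host graph must be $k$-regular. I would set $L=\lceil\sqrt{n/k}\rceil$, $m=\lceil n/L\rceil=\Theta(\sqrt{nk})$, and $L^*=\min\{1+\lfloor\alpha/2\rfloor,L\}$, and construct $H$ as follows. Partition the $n$ vertices into $m$ columns of $L$ vertices each. The backbone of $H$ consists of a path down each column together with alternating rungs joining consecutive columns at rows~$1$ and~$L^*$, exactly as in the preceding theorem. These backbone edges already form a spanning tree $T$ of radius $\Omega(mL^*)$. To raise every vertex's degree to $k$, I would add short-range \emph{filler} edges via a suitable gadget: for instance, group the $m$ columns into blocks of $k-1$ consecutive columns and, inside each row of a block, put a clique $K_{k-1}$; small adjustments at block and column boundaries then bring the remaining vertices up to degree $k$. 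Crucially, every filler edge can be arranged to join two vertices at $T$-distance $O(1)$.

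The candidate equilibrium $\sigma$ is the BFS orientation of $T$ rooted at a center on row $L^*$, as in the grid proof. The NE check essentially repeats that argument. Because filler edges join $T$-close vertices, activating a single filler edge can only shrink the endpoint's eccentricity by $O(1)$, which is absorbed by the cost $\alpha$ once $\alpha=\Omega(1)$ (and for smaller $\alpha$ the claimed bound is trivial). Any unbought backbone edge at a row strictly between $1$ and $L^*$ gives a shortcut of length at most $L^*-1\le\lfloor\alpha/2\rfloor$, again insufficient to justify its cost. As in the grid proof, the connectivity argument also rules out deviations that buy fewer edges than $\sigma$ does, because the resulting subtree never contains the farthest vertex from the deviator.

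The cost estimate mirrors the grid computation. The tree has radius $\Omega(mL^*)$, yielding $SC(\sigma)=\Omega(\alpha n+n\,m\,L^*)$; activating the whole of $H$ gives the upper bound $SC(\textsc{Opt})\le\alpha|E(H)|+n\,\delta_H=O(\alpha n k + n(L+m))$, since the backbone alone already bounds $\delta_H=O(L+m)$. Substituting $L=\Theta(\sqrt{n/k})$ and doing a short case split on whether $L^*=1+\alpha$ or $L^*=L$ gives
\[
\frac{SC(\sigma)}{SC(\textsc{Opt})} \;=\; \Omega\!\left(\frac{\alpha+mL^*}{\alpha k+L+m}\right) \;=\; \Omega\!\left(1+\min\!\left\{\alpha,\frac{n}{\alpha k}\right\}\right),
\]
which is the claimed bound; the hypothesis $k=o(n/\alpha)$ is precisely what makes the $n/(\alpha k)$ term super-constant so that the bound is non-trivial.

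The main obstacle is the design of the filler gadget. For the grid $k$ is effectively $4$ and no filler is needed, but for general $k\ge 3$ one has to add $\Theta(k)$ neighbors per vertex so that simultaneously (i)~every vertex ends with degree exactly $k$, (ii)~the backbone tree $T$ remains a NE, and (iii)~the $H$-diameter stays $O(L+m)$ so as not to spoil the upper bound on $SC(\textsc{Opt})$. The corner cases $k=3$ (where the filler is essentially a perfect matching) and the boundaries of blocks and columns require a little extra care, but these are the standard kinds of adjustments typical of product-style constructions.
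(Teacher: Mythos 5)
Your overall template (a stretched spanning tree inside a denser host, compared against the cost of activating the whole host) is the right one, and your final arithmetic does deliver $\Omega\big(1+\min\{\alpha,\frac{n}{\alpha k}\}\big)$ \emph{provided} the proposed profile is a Nash equilibrium. The genuine gap is the filler gadget, and it is not a boundary-case technicality: the assertion that ``every filler edge can be arranged to join two vertices at $T$-distance $O(1)$'' is false for the design you describe. In the snake tree $T$ the only connections between consecutive columns are the rungs at rows $1$ and $L^*$, so two vertices in the same row of adjacent columns are already at $T$-distance $\Theta(L^*)$ (the tree path must detour through a rung row), and a clique edge joining columns $j$ and $j+c$ inside a block shortcuts $c-1$ full column traversals of the snake, a saving of $\Theta(cL^*)=\Theta(c\alpha)$. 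Since the vertex realizing the eccentricity of such an endpoint lies beyond the skipped columns, buying one such edge reduces that eccentricity by roughly $c\,(L^*-1)$, which exceeds $\alpha$ already for $c=2$; hence $\sigma$ is not stable for any $k\ge 4$ (blocks of $k-1\ge 3$ columns necessarily contain such edges). More structurally, any scheme that raises degrees by adding edges \emph{among backbone vertices of distinct columns} injects shortcuts of length $\Omega(L^*)$ into the snake, so it cannot supply $k-O(1)$ extra neighbours per vertex without destroying stability.

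The paper's proof sidesteps this entirely. Its backbone is a single path of $\eta$ vertices with shortcut edges every $l\approx \alpha+1$ steps (each unbought shortcut saves exactly $l-1\le\alpha$), and the degree is raised to $k$ by attaching to each path vertex a \emph{pendant} gadget on $k+1$ fresh vertices (a clique minus a partial matching, wired so the host becomes exactly $k$-regular). Pendant gadgets create no shortcuts whatsoever, so the equilibrium check reduces to the path-plus-shortcuts analysis; the price is that $n=\Theta(\eta k)$, which is precisely where the $k$ in $n/(\alpha k)$ comes from (in your version it would instead come from $\alpha|E(H)|=\Theta(\alpha n k)$). If you want to salvage your plan, the fix is the same idea: place the extra $k-O(1)$ neighbours on new vertices hanging off each backbone vertex, rather than on cliques spanning several columns.
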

\begin{proof}
	First of all, observe that for $\alpha=O(1)$ and $\alpha=\Omega(n)$ the claim trivially holds since the lower bound becomes $\Omega(1)$. Therefore, we consider the case $\alpha=\omega(1)$ and $\alpha=o(n)$. For the sake of readability, we provide the complete proof for even values of $k$ and we only sketch it for odd values of $k$ as the construction is very similar.

	Let $l$ be the greatest integer such that $l \le \lfloor \alpha + 1 \rfloor$, and let $\eta$ be a large value such that $\eta \equiv 1 \pmod{l}$. Notice that if the number of players $n$ is sufficiently large, then $l \ge 3$. We will use a host graph $H$ composed by: (i) a path $P$ of $\eta$ nodes, numbered from $0$ to $\eta-1$, (ii) a set of shortcut edges on $P$ (as described in the following), and (iii) a set of gadgets appended to $P$ and used to increase to $k$ the degree of its vertices (as described in the following).
	
	Concerning the shortcut edges, let $u_i$ be the node on $P$ numbered $i \cdot l$ for $i=0, \dots, g$, where $g= \frac{\eta-1}{l}$. Then, a shortcut edge connects $u_i$ to $u_{i+1}$, for $0\le i  < g$. Notice that any node on $P\setminus\{u_1, \ldots, u_{g-1}\}$ has now degree $2$ while the degree of all the nodes $u_1, \ldots, u_{g-1}$ is equal to 4.

	Concerning the gadgets, for each node $u$ on $P$ that has degree $d < k$, we augment $H$ as follows:
	\begin{itemize}
		\item we add a complete, loop-free, graph $K_u$ on $k+1$ vertices to $H$;
		\item we remove $\frac{k-d}{2}$ vertex-disjoint edges from $K_u$. So, $d$ vertices of $K_u$ have degree $k$ while the other $k-d$ vertices have degree $k-1$;
		\item we connect $u$ to the $k-d$ nodes of $K_u$ with degree $k-1$.
	\end{itemize}		
	At the end of this process the resulting host graph $H$ is $k$-regular.	Consider now a strategy profile $\sigma$ such that:
	\begin{itemize}
		\item all the edges of the path $P$ are bought (arbitrarily) by vertices other than $u_i$, $i=0, \ldots, g$;
		\item each vertex of a gadget that has an edge towards a node on $P$ buys it;
		\item the remaining vertices of the gadgets buy a single edge towards a vertex adjacent to a node of $P$.
	\end{itemize}		

	An example of the resulting configuration for $k=4$ along with the edges of the host graph is shown in Figure \ref{fig:LB_PoA_k_regular}(b).

\begin{figure}
	\center
	\includegraphics[scale=0.65]{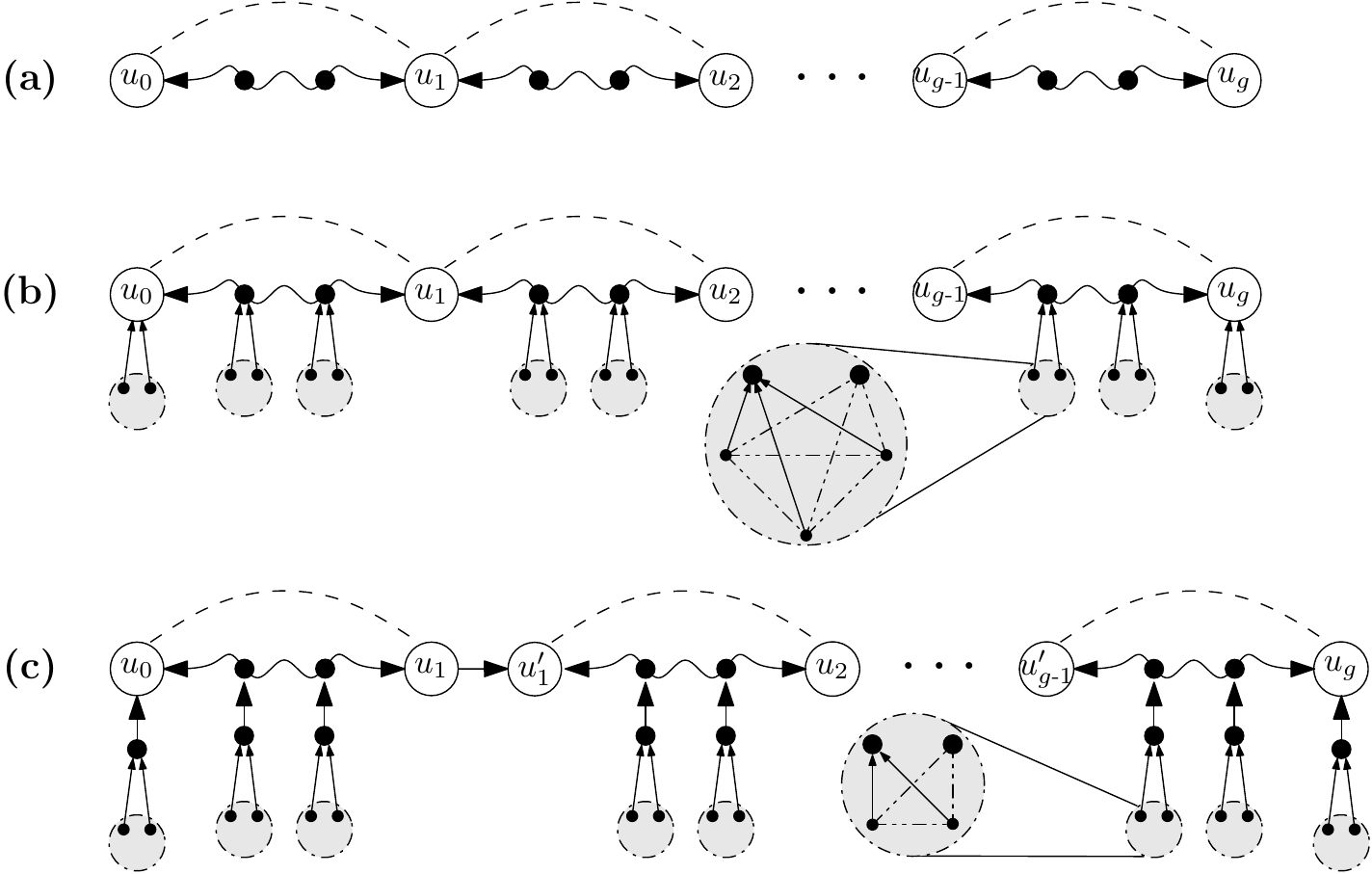}
	\caption{Representation of the host graph and the equilibrium used in the proof of Theorem \ref{thm:LB_PoA_k_regular} for (b) $k=4$, and (c) $k=3$.}
	\label{fig:LB_PoA_k_regular}
\end{figure}

	We show that $\sigma$ is a NE. Indeed, every node $u_i$ can only change its strategy by buying either one or two edges, but this can decrease its eccentricity by at most $l-1$, while increasing its building cost of at least $\alpha \ge l-1$. Moreover, the remaining nodes in $P$ cannot change their strategy, as doing so will cause the disconnection of the graph. Finally, the nodes of the gadgets buy just a single edge, and no other choice can decrease their eccentricity.
	
	Clearly $SC(\sigma)=\Omega(\alpha n + n\eta)$, as $G(\sigma)$ is a tree with radius $\Theta(\eta)$. Let now $\widehat{G}$ be the graph obtained by adding to $G(\sigma)$ the shortcut edges of $H$. The number of edges of $\widehat{G}$ is $n-1+g \le 2n$, and its diameter is bounded by $2 \cdot \varepsilon_{\widehat{G}}(u_0) \le 2 \cdot (g+l+2)$, as $u_0$ can take advantage of the shortcut edges. As a consequence, with a small abuse of notation, we have $SC(\widehat{G})=O(\alpha n + n(g+l))$.
	
	Using the relations $l=\Theta(\alpha)$, $\eta=\Theta(lg)$, and $n=\Theta(\eta k)$, we have that
	\begin{equation*}
		\mathrm{PoA} \ge \frac{SC(\sigma)}{SC(\widehat{G})} = \frac{\Omega(\alpha n + n \eta)}{O(\alpha n + n(g+l))} =
		\Omega\left(\frac{\eta}{\alpha+\frac{\eta}{\alpha}}\right)= \Omega\left(\frac{n}{\alpha k + \frac{n}{\alpha}}\right)
	\end{equation*}
	
\noindent
from which the claim easily follows.

If $k$ is odd, then a host graph similar to the one shown in Figure \ref{fig:LB_PoA_k_regular}(c) (for the case $k=3$) is considered. Notice that the shortcut edges are now vertex-disjoint, and each node incident to them has degree $3$, but for $u_0$ and $u_g$ that have degree 2. We first append a gadget to every node $x$ with degree $2$, in order to obtain a $3$-regular graph.
This gadget is a similar to a clique on $4$ vertices where an edge $e$ between two adjacent vertices has been replaced by two vertices going from the endpoints of $e$ to a new node $u$. The vertex $u$ is then connected to $x$ using a new edge (bought by $u$).
	If $k=3$ we are done, otherwise we can append a gadget similar to the one shown for the even case to each node of the graph we just constructed, in order to increase the degree of each node to $k$.	\qed
\end{proof}

Notice that we can extend the previous lower bound to outerplanar and series-parallel graphs just by considering the host graph composed by the path plus the shortcut edges, without any additional gadget, as shown in Figure \ref{fig:LB_PoA_k_regular}(a).

\begin{theorem}
The PoA is $\Omega\big(1+\min\{\alpha,\frac{n}{\alpha}\}\big)$, even when the host graph is a an outerplanar or a series-parallel graph.
\end{theorem}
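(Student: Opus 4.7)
The plan is to instantiate the construction of Theorem \ref{thm:LB_PoA_k_regular} after discarding the degree-completion gadgets, so that the host graph reduces to a path together with parallel shortcuts. Explicitly, let $P$ be a path on the $n$ players labelled $0, \dots, n-1$, pick an integer $l = \Theta(\alpha)$ with $l \le \lfloor \alpha+1\rfloor$ (the ranges $\alpha = O(1)$ and $\alpha = \Omega(n)$ being trivial, as in the earlier proof), let $g = \Theta(n/l)$, denote by $u_i$ the vertex at position $i \cdot l$ for $0 \le i \le g$, and add the shortcut edge $(u_i, u_{i+1})$ for every $0 \le i < g$. I would verify that $H$ is outerplanar via the embedding that places $P$ on a line with all (non-crossing) shortcut arcs drawn on one side, and that it is series-parallel via its decomposition into $g$ two-terminal ``theta'' blocks (each being a length-$l$ subpath in parallel with a single shortcut) composed in series, possibly with a residual subpath appended.

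Next, I would exhibit an equilibrium $\sigma$ such that $G(\sigma) = P$ by letting each path edge $(v, v{+}1)$ be owned by the endpoint that is not of the form $u_j$ (always possible, since two consecutive vertices of $P$ are never both shortcut endpoints whenever $l \ge 2$). Verifying that $\sigma$ is a NE follows the template of Theorem \ref{thm:LB_PoA_k_regular}: non-shortcut vertices have no unbought incident edge in $H$, and dropping their unique owned edge would disconnect $G(\sigma)$; each shortcut endpoint $u_i$ can only deviate by acquiring one or both of its two available shortcuts, and any such purchase reduces its eccentricity by at most $l - 1$ while costing at least $\alpha \ge l - 1$.

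Finally, with $G(\sigma) = P$ having radius $\Theta(n)$ I would obtain $SC(\sigma) = \Omega(\alpha n + n^2)$, whereas the network $\widehat G$ obtained by adding all $g$ shortcuts to $P$ has $n-1+g$ edges and diameter $O(g + l)$ (travel at most $l$ path steps to reach some $u_i$, at most $g$ hops along the shortcut chain, then at most $l$ path steps back), giving $SC(\opt) \le SC(\widehat G) = O(\alpha n + n^2/\alpha)$. Consequently
\[
\frac{SC(\sigma)}{SC(\opt)} \;=\; \Omega\!\left(\frac{\alpha n + n^2}{\alpha n + n^2/\alpha}\right) \;=\; \Omega\bigl(1 + \min\{\alpha, n/\alpha\}\bigr),
\]
which is the stated lower bound, and it holds on a single instance that is simultaneously outerplanar and series-parallel.

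The main subtlety lies in the NE verification, specifically in showing that buying both shortcuts at $u_i$ still saves only $l - 1$ in eccentricity rather than $2(l-1)$: since the eccentricity of $u_i$ in $P$ is realised by exactly one endpoint of $P$, only the shortcut pointing in that direction contributes any savings, so the second shortcut is wasted. Everything else is essentially a relabelling of the argument already carried out in Theorem \ref{thm:LB_PoA_k_regular}.
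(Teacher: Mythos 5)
Your proposal is correct and matches the paper's intended argument exactly: the paper derives this theorem by taking the host graph of Theorem~\ref{thm:LB_PoA_k_regular} restricted to the path plus the shortcut edges with no degree-completion gadgets, which is precisely your construction, equilibrium, and cost calculation. The only substantive detail you add beyond the paper's one-line justification is the explicit check that buying both shortcuts at $u_i$ still saves only $l-1$ in eccentricity, and that check is sound.
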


We end this section by proving a non-constant lower bound to the PoA when $\alpha=o(n)$. Remarkably, our lower bound implies a non-constant lower bound to the PoA for every small value of $\alpha$, i.e., players buy edges for free. Our lower bounding construction is a non-trivial modification of the 2D-torus-rotated-45${}^\circ$ construction used in \cite{ADH10} to prove a lower bound for \textsc{Basic}NCG.

\begin{theorem}\label{lm:torus}
For $\alpha=o(n)$, the PoA is $\Omega\left(\sqrt{\frac{n}{1+\alpha}}\right)$.
\end{theorem}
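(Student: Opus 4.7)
The plan is to modify the 2D-torus-rotated-45$^\circ$ construction of \cite{ADH10} used for \textsc{Basic}NCG in two ways: (i) subdividing each torus edge into a path of length $\ell = \Theta(1+\alpha)$ so that the construction scales correctly with $\alpha$, and (ii) carefully augmenting the host graph $H$ with shortcut edges so that there exists a small-cost optimum while the Nash property survives against arbitrary subset deviations (as opposed to only single-edge swaps in \cite{ADH10}). Concretely, I would take a torus $T$ on $\mathbb{Z}_m \times \mathbb{Z}_m$ with diagonal adjacency (neighbors $(i\pm1, j\pm1)$), whose graph distance equals the toroidal Chebyshev metric and whose diameter is $m/2$, then subdivide every edge of $T$ to obtain a graph $G$ on $n = \Theta(m^2\ell)$ vertices of diameter $\Theta(m\ell)$. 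Setting $m = \Theta(\sqrt{n/(1+\alpha)})$ gives diameter $\Theta(\sqrt{n(1+\alpha)})$. Finally, $H$ is obtained from $G$ by adding a sparse family of shortcut edges, chosen so that (a) $H$ admits a low-diameter spanning subgraph, and (b) the number of shortcut edges incident at any single vertex of $G$ is kept small.

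The proposed strategy profile $\sigma$ has every main torus vertex own all $\ell$ edges of its two outgoing subdivided paths, while no internal subdivision vertex or shortcut edge is ever bought. Then $G(\sigma) = G$, the building cost is $\Theta(\alpha n)$, and the routing cost is $\Theta(n\sqrt{n(1+\alpha)})$, hence $SC(\sigma) = \Theta(n\sqrt{n(1+\alpha)})$ when $\alpha = o(n)$. Using the low-diameter spanning subgraph of $H$, one exhibits a profile of social cost $O(n(1+\alpha))$, so the ratio $SC(\sigma)/SC(\mbox{\sc Opt})$ is $\Omega(\sqrt{n/(1+\alpha)})$, as required.

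The main obstacle, and the most technical part, is verifying that $\sigma$ remains a Nash equilibrium under arbitrary subset deviations. The crucial property inherited from the rotated torus (and what makes the modification non-trivial) is that from every vertex $v$ of $G$, the set of vertices realizing $v$'s eccentricity consists of $\Theta(m)$ vertices laid out along two perpendicular ``antipodal lines'' of the torus, rather than being concentrated at a single antipode. Combined with the standard Chebyshev-ball-versus-line geometry, any shortcut edge incident to $v$ can only bring into reach a bounded portion of these antipodal lines; translated to the subdivided graph, each new incident edge reduces $v$'s eccentricity by only $O(\ell) = O(1+\alpha)$, irrespective of where its far endpoint lies. Consequently, any deviation buying $k$ new incident edges saves at most $O(k(1+\alpha))$ in routing cost, which cannot outweigh the $k\alpha$ building-cost increase.

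The technical core is therefore to engineer the shortcut family in $H$ so that the two competing goals are simultaneously met: on one hand $H$ must have a low-diameter spanning subgraph to make the optimum cheap, and on the other hand no vertex can collect enough incident shortcut edges to cover more than a bounded fraction of its antipodal lines. Additional care is needed to handle subset deviations that combine shortcut edges with the addition of locally-incident subdivision edges of $G$ not owned by the player, since a priori their eccentricity reductions could compound; the Chebyshev geometry of the rotated torus should guarantee the naive single-edge bound still holds up to constants for all such combinations.
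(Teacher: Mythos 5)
Your construction has the right skeleton (the rotated torus, subdivision of each torus edge into a path of length $\ell=\Theta(1+\alpha)$, extra host edges to make the optimum cheap), but the equilibrium argument has a genuine gap. You conclude that each newly bought incident edge reduces a vertex's eccentricity by only $O(\ell)=O(1+\alpha)$, and that therefore a deviation buying $k$ edges saves $O(k(1+\alpha))$ against a building-cost increase of $k\alpha$, so it ``cannot outweigh'' it. This comparison does not close: the hidden constant matters even for large $\alpha$, and it fails outright for small $\alpha$ --- which is precisely the regime the theorem is most interested in (it is what yields the $\Omega(\sqrt{n})$ bound when edges are nearly free). If $\alpha$ is tiny, any strictly positive eccentricity reduction is an improving move, so an upper bound of ``$O(1+\alpha)$ savings per edge'' proves nothing. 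What the paper's proof actually establishes is much stronger: the shortcut edges available to a junction vertex $\langle i,j\rangle$ go exactly to the cross $X_{i,j}=\{\langle i',j'\rangle : i'=i \text{ or } j'=j\}$, and \emph{every} vertex of $X_{i,j}$ is still at full distance $\ell k$ from the single antipodal witness $\langle |i-k|,|j-k|\rangle$. Hence buying \emph{any} subset of the available edges, even all of them, reduces the eccentricity by exactly zero, and the deviation is strictly losing for every $\alpha>0$. Your design principle that ``no vertex can collect enough incident shortcut edges to cover more than a bounded fraction of its antipodal lines'' points in a related direction but is not sufficient as stated: you would still need the uncovered part to remain at the \emph{exact} original eccentricity, not merely within $O(\ell)$ of it. (Note also that in the paper's construction each junction vertex has $\Theta(k)$ incident shortcut edges, so your requirement (b) that the incident shortcut count be small is neither needed nor what makes the argument work.)

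Two further points. First, your profile has junction vertices own all $2\ell$ edges of two outgoing subdivided paths; the paper instead assigns every edge of $G$ to path (subdivision) vertices, so junction vertices own nothing and one only has to rule out additions for them, while path vertices already possess all their host edges and one only has to rule out deletions (handled by the property that removing a subdivided-path edge forces a detour of length $\geq \ell$, exceeding the $\alpha$ saved since $\ell=2(1+\lceil\alpha\rceil)$). Your ownership choice opens a much larger deviation space (dropping or re-routing some of the $2\ell$ owned edges while buying shortcuts) that you do not address. Second, you leave the shortcut family unspecified and defer its engineering to ``the technical core''; since the entire difficulty of the theorem lives there, the proposal as written does not contain the key idea needed to complete the proof.
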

\begin{proof}
Let $k\in\mathbb{N}$ and let $\bar H$ be an edge-weighted 2D-torus-rotated-45${}^\circ$ consisting of $2k^2$ vertices that we call junction vertices. For every pair of integers $0\leq i,j< 2k$, with $i+j$ even, there is exactly one vertex of $\bar H$ labeled with $\langle i,j\rangle$. We treat the two integers of a vertex label as modulo $2k$. Each vertex $\langle i,j\rangle$ has exactly four neighbors in $\bar H$: $\langle i\pm1,j\pm 1\rangle$. All edge weights are equal to $\ell=2(1+\lceil\alpha\rceil)$. For every pair of integers $0\leq i,j< 2k$, let $X_{i,j}=\{\langle i',j'\rangle\mid i'=i \text{ or }j'=j\}$. The properties satisfied by $\bar H$ are the following:

\begin{enumerate}

\item[(i)] $\bar H$ is vertex transitive, i.e., any vertex can be mapped to any other by a vertex automorphism, i.e., a relabeling of vertices that preserves edges;
\item[(ii)] the distance between two vertices $\langle i,j\rangle$ and $\langle i',j'\rangle$ in $\bar H$ is equal to \linebreak $\ell \cdot \max\big\{\bar d(i,i'),\bar d(j,j')\big\}$, where $\bar d(h,h')=\min\big\{|h-h'|,2k-|h-h'|\big\}$;
\item[(iii)] the eccentricity of each vertex in $\bar H$ is equal to $\ell k$;
\item[(iv)] for every $0\leq i,j<2k$, the distance from every vertex $v \in X_{i,j}$ to vertex $\langle |i-k|,|j-k|\rangle$ is equal to $\ell k$;
\item[(v)] for every edge $e$ of $\bar H$, the eccentricity of both endpoints of $e$ in $\bar H - e$ is greater than or equal to $\ell(k+1)$;
\item[(vi)] for every edge $e$ of $\bar H$ and for every vertex $\langle i,j\rangle$, the distance from $\langle i,j\rangle$ and the closest endpoint of $e$ is less than or equal to $\ell(k-1)$.

\end{enumerate}
It is easy to see that (i) holds and it is also easy to see that (iv) holds once (ii) has been proved. To prove (ii), it is enough to observe that each label can change by $\pm 1$ each time we move from one vertex to any of its neighbors. To prove (iii), we use (i) and the fact that the distance from vertex $\langle i',j'\rangle$ to $\langle k,k\rangle$, which is equal to $\max\{|k-i'|,|k-j'|\}$, is maximized for $i'=j'=0$. To prove (v), we first use (i) to assume that, w.l.o.g, $e$ is the edge linking $\langle k,k\rangle$ with $\langle k-1,k-1\rangle$. Next, we observe that any path in $\bar H-e$ going from $\langle k,k\rangle$ to $\langle 1,1\rangle$ must traverse a neighbor $v$ of $\langle k,k\rangle$ in $\bar H-e$ and the distance between $v$ and $\langle 1,1\rangle$ in $H$ is equal to $\ell k$ because one of the two integers in the label of $v$ is equal to $k+1$. Finally, to prove (vi), we first use (i) to assume that, w.l.o.g., $i=j=k$, i.e., $\langle i,j\rangle$ is $\langle k,k\rangle$, and the two endpoints of $e$ are, respectively, $\langle i',j'\rangle$ and $\langle i'+1,j'+1\rangle$, where $0\leq i',j'<k$. Using (ii), it is easy to see that the distance from $\langle k,k\rangle$ to $\langle i'+1,j'+1\rangle$ is less than or equal to $\ell (k-1)$.

Let $G$ be an unweighted graph obtained from $\bar H$ by replacing each edge of $\bar H$ with a path of length $\ell$ via the addition of $\ell-1$ new vertices per edge of $\bar H$. Let $H$ be the host graph obtained from $G$ by adding an edge between $\langle i,j\rangle$ and every vertex in $X_{i,j}$, for every junction vertex $\langle i,j\rangle$ (see also Figure \ref{fig:torus}). Notice that the number of vertices of $H$ is $n=2k^2+4k^2(\ell-1)=\Theta\big((1+\alpha) k^2\big)$. In what follows, we call the vertices in $H$ which are not in $\bar H$ path vertices.

\begin{figure}
\center
\includegraphics[scale=0.7]{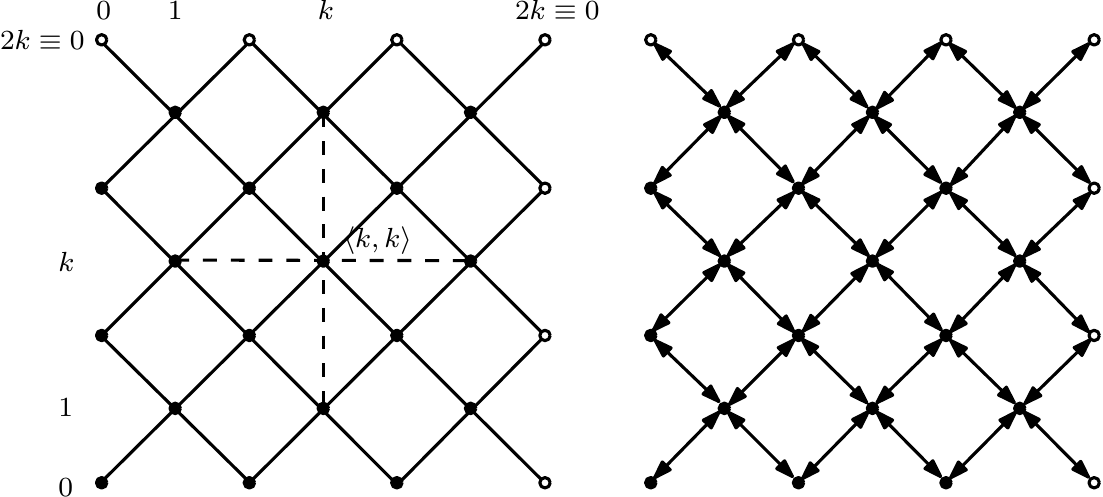}
\caption{The lower bound construction of Theorem \ref{lm:torus}. On the left side, the host graph $H$ is depicted. For the sake of readability, only junction vertices are visible and not all the edges are shown. The white vertices of row $2k\equiv 0$ are copies of the vertices of row $0$ while the white vertices of column $2k\equiv 0$ are copies of the vertices of column $0$. The solid edges are paths of length $\ell$, while the dashed edges are all the other edges adjacent to vertex $\langle k,k\rangle$. On the right side, the stable graph $G$ is depicted.}
\label{fig:torus}
\end{figure}

Let $\sigma$ be any strategy profile such that $G(\sigma)=G$ and all edges of $G(\sigma)$ are bought by players sitting on path vertices, i.e., no edge of $G(\sigma)$ is bought by some player sitting on junction vertices. We prove that $\sigma$ is a NE.

We start proving that players sitting on junction vertices are playing a best response strategy. Let $\langle i,j\rangle$ be a junction vertex. Observe that $\langle i,j\rangle$ is not buying any edge, therefore it suffices to show that $\langle i,j\rangle$ cannot improve its cost function by buying edges. First of all, observe that by (iii) and (vi), the eccentricity of $\langle i,j\rangle$ in $G$ is equal to $\ell k$. Indeed, if $v$ is a path vertex of a path $P$ corresponding to edge $e$ of $\bar H$, then the distance from $\langle i,j\rangle$ to the closest endpoint of $P$ (which corresponds to the closest endpoint of $e$) is less than or equal to $\ell(k-1)$. Therefore, the distance from $\langle i,j\rangle$ to $v$ is less than or equal to $\ell k$. To prove that $\langle i,j\rangle$ is in equilibrium, simply observe that if we add to $G$ all edges of $H$ incident to $\langle i,j\rangle$, i.e., all edges linking $\langle i,j\rangle$ to vertices in $X_{i,j}$, then by (iv) the distance from $\langle i,j\rangle$ to $\langle |i-k|,|j-k|\rangle$ is still $\ell k$.

Now, we prove that players sitting on path vertices are playing a best response strategy. Let $v$ be a path vertex. First of all, the eccentricity of $v$ in $G$ is less than or equal to $\ell k +\frac{1}{2}\ell$ by (vi). Indeed, if $v$ is a vertex of a path $P$ corresponding to edge $e$ of $\bar H$, then the distance from any junction vertex to the closest endpoint of $P$ (which corresponds to the closest endpoint of $e$) is less than or equal to $\ell(k-1)$. Therefore, the distance from $v$ to every junction vertex is less than or equal to $\ell k$ and the distance from $v$ to every other path vertex is less than or equal to $\ell k +\frac{1}{2}\ell$. Now, observe that $G$ already contains all edges of $H$ incident to $v$ and, moreover, the degree of $v$ in $G$ is equal to 2. Therefore, $v$ might improve its cost function by removing exactly one edge incident to it, i.e., by buying fewer edges than those it is buying in $\sigma$. However, if $v$ removes any of its incident edges in $G$, thus saving an $\alpha$ factor from its building cost, then by (v) the eccentricity of the unique junction vertex closest to $v$ becomes greater than or equal to $\ell(k+1)$ and thus, the eccentricity of $v$ also becomes greater than or equal to $\ell (k+1)$. Since $\ell(k+1)-\alpha\geq \ell (k+1) -\frac{\ell-2}{2}>\ell k + \frac{1}{2}\ell$, $v$ does not improve its cost function by buying fewer edges than those it is buying in $\sigma$.

To complete the proof, we have to show that PoA is $\Omega\left(\sqrt{\frac{n}{1+\alpha}}\right)$.
First of all, observe that the radius of $H$ is $\Theta(\ell)=\Theta(1+\alpha)$. Let $T$ be a breadth-first-search tree rooted at $\langle k,k\rangle$. Clearly the radius of $T$ is also $\Theta(1+\alpha)$. Furthermore, the social cost of $\opt$ is upper bounded by the social cost of $T$, i.e., $SC(\opt) \le \alpha(n-1)+n \cdot O(1+\alpha)=O\big((1+\alpha)n\big)$. As $SC(\sigma)\geq \alpha(4\ell k^2)+n\ell k=\Omega(n\ell k)=\Omega\big((1+\alpha) nk\big)$, we have that
$$
\frac{SC(\sigma)}{SC(\opt)}=\frac{\Omega\big((1+\alpha) nk\big)}{O\big((1+\alpha)n\big)}=\Omega(k)=\Omega\left(\sqrt{\frac{n}{1+\alpha}}\right).
\displayqed
$$
\end{proof}

\section{Conclusions}
In this paper, we studied the \textsc{MaxNCG} in the scenario in which the strategy space of all the players is constrained by a host graph. Our game is interesting for two reasons. First of all, it models the practical situation in which not all the links can be constructed due to physical limitations. Furthermore, the \np-hardness barrier of computing a best response strategy of a player in the \textsc{MaxNCG}, which clearly carries over into our game, can be easily broken if we restrict our game to the class of host graphs of (constant) bounded degree.

In our paper, we proved a strong lower bound of $\Omega(1+\min\{\alpha,n/\alpha\})$ to the PoA for two meaningful graph classes, namely bounded-degree host graphs and 2-dimensional square grids. Our lower bound asymptotically matches the general upper bound we provided for every $\alpha=\Omega(\sqrt{n})$. Since the PoA in the classical \textsc{MaxNCG} is mostly constant (see \cite{MS10}), we therefore have that the drawback of having a polynomial-time computability of best response strategies is the existence of stable networks of large social cost.

Finally, we concluded our paper by proving another lower bound of $\Omega(\sqrt{\frac{n}{1+\alpha}})$ to the PoA. Observe that all our lower bounds are never smaller than the lower bound of $\Omega(1+\min\{\alpha/n,n^2/\alpha\})$ known for the corresponding sum version of the game (see \cite{DHM09}). This is mainly due to the fact that reducing the routing cost of a player in our game is costlier, in terms of the building cost incurred by that player, than in the corresponding sum version of the game.

\end{document}